\def\block(#1,#2)#3{\multicolumn{#2}{c}{\multirow{#1}{*}{$ #3 $}}}
\newcommand\arraybslash{\let\\\@arraycr}
\newcolumntype{+}{>{\global\let\currentrowstyle\relax}}
\newcolumntype{^}{>{\currentrowstyle}}
\newcommand\dsum{\displaystyle\sum}
\newlength{\bracewidth}
\def\fudge{\mathchoice{}{}{\mkern.5mu}{\mkern.8mu}}
\def\bbc#1#2{{\rm \mkern#2mu\vbar\mkern-#2mu#1}}
\def\bbb#1{{\rm I\mkern-3.5mu #1}}
\def\bba#1#2{{\rm #1\mkern-#2mu\fudge #1}}
\def\bb#1{{\count4=`#1 \advance\count4by-64 \ifcase\count4\or\bba A{11.5}\or
   \bbb B\or\bbc C{5}\or\bbb D\or\bbb E\or\bbb F \or\bbc G{5}\or\bbb H\or
   \bbb I\or\bbc J{3}\or\bbb K\or\bbb L \or\bbb M\or\bbb N\or\bbc O{5} \or
   \bbb P\or\bbc Q{5}\or\bbb R\or\bbc S{4.2}\or\bba T{10.5}\or\bbc U{5}\or
   \bba V{12}\or\bba W{16.5}\or\bba X{11}\or\bba Y{11.7}\or\bba Z{7.5}\fi}}
\def \R {\bb R}
\def \diag {\textrm{diag}}
\newtheorem{theorem}{Theorem}[section]
\newtheorem{remark}{Remark}[section]
\newtheorem{lemma}{Lemma}[section]
\title{Global Analysis of Multi-Host and Multi-Vector Epidemic Models}
\author{Derdei Bichara\\ Department of Mathematics \& Center for Computational and
 Applied \\Mathematics, California State University, Fullerton, CA 92831, USA.
}
\date{}
\begin{document}
\maketitle

\begin{abstract}
We formulate a multi-group and multi-vector epidemic model in which hosts' dynamics is captured by staged-progression $SEIR$ framework and the dynamics of vectors is captured by an $SI$ framework. The proposed model describes the evolution of a class of zoonotic infections where the pathogen is shared by $m$ host species and transmitted by $p$ arthropod vector species. In each host, the infectious period is structured into $n$ stages with a corresponding infectiousness parameter to each vector species. We determine the basic reproduction number $\mathcal{R}_0^2(m,n,p)$ and investigate the dynamics of the systems when this threshold is less or greater than one. We show that the dynamics of the multi-host, multi-stage, and multi-vector system is completely determined by the basic reproduction number and the structure of the host-vector network configuration. Particularly, we prove that the disease-free \mbox{equilibrium} is globally asymptotically stable (GAS) whenever $\mathcal{R}_0^2(m,n,p)<1$, and a unique strongly endemic equilibrium exists and is GAS if $\mathcal{R}_0^2(m,n,p)>1$ and the host-vector configuration is irreducible. That is, either the disease dies out or persists in all hosts and all vector species.
\end{abstract}

{\bf Mathematics Subject Classification:}
  34A34, 34D23, 34D40, 92D30

\paragraph{\bf Keywords:}
  Zoonoses, multi-host, multi-vector, global stability, Lyapunov functions, graph theory.

\section*{Introduction}
Nearly two-thirds of all known human infectious diseases (ID) are caused by zoonotic pathogens which are transmissible from one host species (humans and vertebrate animals) to another \cite{Kareshetal2012,taylor2001risk}, and therefore multi-host. Moreover, 75\% of emerging and re-emerging infectious diseases are classified as zoonoses and constitute a major public health problem around the world, responsible for over one million death and hundreds of millions of cases each year \cite{WorldBankZoonose}. Furthermore, it is estimated that zoonoses cause over 20 billion and 200 billion USD of direct and indirect economic burden across the world respectively \cite{Kareshetal2012,WorldBankZoonose}. Although the morbidity and mortality of most ID decreased, the incidence of zoonoses have increased \cite{kilpatrick2012drivers}. Therefore, understanding the dynamics of zoonoses by systematic modeling and analyzing in order to control and mitigate these scourges should be a worldwide priority.

Multi-host infectious diseases include  Lyme disease, tick-borne relapsing fever (TBRF), West Nile virus (WNV), Chagas disease, type A influenzas, Rift Valley fever, severe acute respiratory syndrome (SARS), etc.  However, 40\% of multi-host ID are vector-borne \cite{johnson2015spillover}. 
 That is, blood-sucking arthropod vectors such as ticks, mosquitoes, fleas and sandflies play the role of connecting multiple hosts while potentially infecting them and getting infected during this process. 
 Thus, formulating the dynamics of zoonoses requires understanding the ecology of all involved host species and their interactions with vector species that carry the pathogen between hosts.  
 %
 These complexities have made a comprehensive study of zoonoses very challenging, making mathematical models of zoonoses scarce and mainly focused on WNV \cite{BowGum05} or directly transmitted zoonoses \cite{begon1994host,begon1992disease,lloyd2009epidemic}. Recent research has been focused on understanding the dynamics and control of vector-borne zoonoses with one vector and two hosts (see \cite{robertson2016host,simpson2011vector} for WNV, \cite{cruz2012control} for Chagas' disease, and \cite{johnson2016modeling,palmer2018dynamics} for TBRF). Also, authors in \cite{BicharaIggidrSmith2017} proposed a class of vector-borne zoonoses with an arbitrary number of hosts and one vector and provided the complete global dynamics of equilibria.

However, another layer of complexity regarding the ecology of zoonoses consists of sometimes different arthropod vector species, or the same arthropod vectors but different genera, are responsible of transmiting the same pathogen to a number of different hosts. For instance, over 65 different mosquito species transmit WNV to a number of hosts including humans, mammals and many species of birds \cite{hamer2008culex,ReisenVectorWNV2006}. Another example of a highly complex zoonosis is the eastern equine encephalitis virus (EEEV). Indeed, as illustrated in \cite{CDCEEEV,magori2013population}, the main vector of EEEV is the mosquito \textit{Culiseta melanura}, which feeds exclusively on birds, and thus infects birds only. These birds infect, in turn, other mosquito species, which then bite humans, therefore creating a direct chain of transmission of the pathogen between a host and vector species with no direct link and/or transmissibility. Hence, as stated in \cite{magori2013population}, the elimination or mitigation of zoonoses requires breaking the multiple transmission cycles corresponding to each potential host and vector species.

Therefore, a better understanding of zoonoses requires taking into account the ecology of all host species (including \textit{dead-end} hosts -- hosts that do not contribute to further transmission of the pathogen) and all vector species, along with the epidemiology of the zoonosis within the before-mentioned species. Moreover, developing general theories for the role of intermediate hosts in pathogen emergence is one of the nine challenges in modeling the emergence of novel pathogens according to \cite{lloyd2015nine}. The goal of this paper is twofold.

The first goal of this paper is to derive a class of models that describes the interaction between $m$ host species and $p$ vector species where the latter differ in their propensity to acquire or infect the pathogen from the former. Moreover, we structure each host's infectious stage into $n$ different ``ages" or classes where they infect each vector species at different rates. 
 The proposed model provides a general framework in modeling zoonoses as it takes into account the complex patterns and multi-faceted host species dynamics along with their interactions with vector species. The derived class of models handles multiple levels of organizations including the case where some host or vector species have different epidemiological structures than others.  Our modeling framework offers a plethora of possible scenarios and could be applied to study specific cases of zoonoses, and thus hopes to provide a forum that could help guide decisions on control efforts to mitigate and/or eradicate some zoonoses. 

The second goal of the paper is to study the global asymptotic behavior of the proposed system. Particularly, we derive conditions under which the disease dies out or persists in all host species and vector species. A key threshold quantity happens to be the basic reproduction number $\mathcal R_0^2(m,n,p)$ for the general system with $m$ host, $n$ stages, and $p$ vector species. We prove that the disease-free equilibrium is globally asymptotically stable (GAS) whenever  $\mathcal R_0^2(m,n,p)<1$. 
 The proof of the uniqueness of an endemic equilibrium (EE) for large epidemic systems is known to be challenging. Indeed, the uniqueness of an EE may not hold for multi-group directly transmitted diseases \cite{MR1154787,JacSimKoo91} and sexually transmitted diseases \cite{HuangCookeCCC92}. For our system, we will prove indeed that it has a unique \textit{strongly} EE (in the sense of Thieme \cite{MR1993355}) under the assumption that the host-vector network configuration is irreducible and $\mathcal R_0^2(m,n,p)>1$. To do so, we  transformed the system at equilibrium into an auxiliary dynamical system and showed that the equilibrium of the newly crafted system, is globally attractive under the pre-cited hypotheses. Moreover, we will prove that the \textit{strongly} EE is GAS whenever it exists. The latter relies on a carefully constructed Lyapunov function and elements of graph theory. 
The authors in \cite{Guo_li_CAMQ_06,Guo_li_PAMS08} first used tools of graph theory to study the GAS of the EE for multi-group models for directed transmitted diseases. To do so, they derived a Lyapunov function for the multi-group system from that of a single epidemic system. Also, in \cite{shuai2011global,shuai2013global}, the authors used these tools to investigate for stage-progression models for water-borne diseases and  the authors in \cite{iggidr2014dynamics} successfully generalized the approach to a multi-group vector-borne $SIR-SI$ system. For our model, the arbitrary number of host species, stages, and vector species (all potentially different) present specificities that make the approach employed the after-mentioned papers impractical. For instance, the construction of the Lyapunov function from that of one group does not apply in our case.  
  
The paper is outlined as follows: \Cref{GeneralModelFormulation} is devoted to the derivation of the model. 
 \Cref{BasicProperties} lays out the basic the properties of the model and determines the associated basic reproduction number. 
\Cref{GASEquilibria} provides a complete analysis of the model,  and \Cref{ConclusionDiscussions} is devoted to concluding remarks and discussions.
\section{Formulation of the Model}\label{GeneralModelFormulation}
We consider the dynamics of a disease transmitted by the interplay between $m$ host species and $p$ arthropod vectors. We assume that the disease follow an $SEI^nR-SI$ structure where $n$ designates the number of infectious stages in the evolution of the disease within each host species. Let $N_i$ represents the total population of each host species $i$ ($1\leq i\leq m$) and let $S_i$, $E_i$, $I_{l,i}$, and $R_{i}$ denote the susceptible, latent, infectious at stage $l$ ($1\leq l\leq n$), and recovered populations of host species $i$, respectively. Let $N_{v,j}$ be the total population of arthropod vectors of species $j$, ($1\leq j\leq p$), each of which is composed by susceptible vectors $S_{v,j}$ and infectious vectors $I_{v,j}$.
The susceptible populations of host species $i$ are generated through a constant recruitment $\Lambda_i$,  subjected to a natural death rate of $\mu_i$ and could be infected after being bitten by an infectious vector of any species $j$. 
After being infected, the susceptible populations of Host $i$ become latent, who then become infectious after an incubation period of $1/\nu_i$. The infectiosity period of host species is structured into $n$ stages, each characterizing the level of parasitemia in the corresponding host. At each stage $l$, the infectious of host $i$ recover at a rate $\eta_{l,i}$, progressing to the next stage of infection at a rate $\gamma_{l,i}$ or naturally die at a rate $\mu_i$. 
The susceptible arthropod vectors of species $j$, $S_{v,j}$, could be infected by all infectious hosts, of any species and of any stage, at a different rates. Moreover, they are replenished at a constant rate $\Lambda_{v,j}$ and die either by natural death, at rate $\mu_{v,j}$, or due to control strategies specific for each vector of species $j$, at a rate $\delta_{v,j}$. 

 The overall multi-host, multi-stage and multi-vector model is captured by the following system:
\begin{equation} \label{MultiHostMultiVector}
\left\{\begin{array}{llll}
\displaystyle\dot S_{i}=\Lambda_i-\sum_{j=1}^pa_{i,j}\beta_{i,j}^\diamond S_{i}\dfrac{I_{v,j}}{N_{i}}-\mu_i S_i\\
\dot E_{i}=\dsum_{j=1}^pa_{i,j}\beta_{i,j}^\diamond S_{i}\dfrac{I_{v,j}}{N_{i}}-(\mu_i+\nu_i)E_{h}\\
\dot I_{1,i}=\nu_iE_i-(\mu_i+\eta_{1,i}+\gamma_{1,i}) I_{1,i}\\
\dot I_{2,i}=\gamma_{1,i}I_{1,i}-(\mu_i+\eta_{2,i}+\gamma_{2,i}) I_{2,i}\\
\vdots\\
\dot I_{n-1,i}=\gamma_{n-2,i}I_{n-2,i}-(\mu_{i}+\eta_{n-1,i}+\gamma_{n-1,i}) I_{n-1,i}\\
\dot I_{n,i}=\gamma_{i,n-1}I_{i,n-1}-(\mu_{i}+\eta_{n,i}) I_{n,i}\\
 \dot R_i=\sum_{k=1}^n\eta_{k,i}I_{i}-\mu_iR_i\\
\displaystyle\dot S_{v,j}=\Lambda_{v,j}-\sum_{i=1}^ma_{i,j} S_{v,j}\sum_{l=1}^n\dfrac{\beta_{l,j}^i I_{l,i}}{N_{i}}-(\mu_{v,j}+\delta_{v,j})S_{v,j}\\
\displaystyle\dot I_{v,j}=\sum_{i=1}^ma_{i,j} S_{v,j}\sum_{l=1}^n\dfrac{\beta_{l,j}^i I_{l,i}}{N_{i}}-(\mu_{v,j}+\delta_{v,j})I_{v,j}.
\end{array}\right.
\end{equation}
A schematic description of the model is captured by \Cref{fig:FlowSEIRMultiHostMultiVector} and the parameters are described in \Cref{TableMHMSMV}.
\begin{table}[h!]\label{TableMHMSMV}
    \caption{Description of the parameters used in System (\ref{MultiHostMultiVectorCompact}).}
    \label{tab:Param}
        \setlength{\tabcolsep}{0em}
    \begin{tabular}{llllllllllll}
     \toprule
      Parameters & Description \\
      \midrule
 $\mathbf{\Lambda_h} =[\Lambda_1,\Lambda_2,\dots,\Lambda_m]^T$  & Vector of recruitment of the hosts;\\
 $\mathbf{\Lambda_v} =[\Lambda_{v,1},\Lambda_{v,2},\dots,\Lambda_{v,p}]^T$  & Vector of recruitment of the vectors;\\
 $a_{i,j}$  & Biting rate of vector $j$ on Host $i$; \\
$ \beta_{i,j}^\diamond$  & Infectiousness of Host $i$ to vectors $j$ per biting; \\
$\mu_h =[\mu_1,\mu_2,\dots,\mu_m]^T$  & Hosts' death rate;\\
$\nu_h =[\nu_1,\nu_2,\dots,\nu_m]^T$  & Hosts' incubation rate;\\
$\alpha_k =[\alpha_{k,1},\alpha_{k,2},\dots,\alpha_{k,m}]^T$  &  Hosts' total duration at infectious stage $k$;\\
$\eta_k =[\eta_{k,1},\eta_{k,2},\dots,\eta_{k,m}]^T$  & Hosts' recovery rate at stage $k$;\\
$\gamma_k =[\gamma_{k,1},\gamma_{k,2},\dots,\gamma_{k,m}]^T$  & Hosts' progression rate from infectious stage $k$ to {\small{$k+1$}};\\
$\beta_{k,j}^i$  & Vector of species $j$'s infectiousness to Host $i$ at stage {\small{$k$}};\\ 
$\mu_v=[\mu_1,\mu_2,\dots,\mu_p]^T$  & Vectors' natural mortality rates;\\
$\delta_v=[\delta_1,\delta_2,\dots,\delta_p]^T$  & Vectors' control-induced mortality rates.\\
  \bottomrule
    \end{tabular}
\end{table}
 The total population of host species and vector species are asymptotically constant as their dynamics are given by $\dot N_i=\Lambda_i-\mu_iN_i$ and $\dot N_{v,j}=\Lambda_{v,j}-(\mu_{v,j}+\delta_{v,j})N_{v,j}$, respectively. Hence, by using the theory of asymptotic systems \cite{CasThi95,0478.93044} and by denoting the limits of $N_i$ and $N_{v,j}$ by $N_i=\frac{\Lambda_i}{\mu_i}$ and $N_{v,j}=\frac{\Lambda_{v,j}}{\mu_{v,j}+\delta_{v,j}}$, System (\ref{MultiHostMultiVector}) is asymptotically equivalent to its limit system. Moreover, in this case, it could be written in a compact form as:
\begin{equation} \label{MultiHostMultiVectorCompact}
\left\{\begin{array}{llll}
\displaystyle\dot{\mathbf{S}}= \mathbf{\Lambda}_{h} -\diag^{-1}(\mathbf{N}_h)\diag(\mathbf{S})A\circ B^\diamond \mathbf{I}_v-\diag(\mu_h) \mathbf{S}\\
\dot{\mathbf{E}}=\diag^{-1}(\mathbf{N}_h)\diag(\mathbf{S})A\circ B^\diamond \mathbf{I}_v-\diag(\mu_h+\nu_h)\mathbf{E}\\
\dot{\mathbf{I}}_1=\diag(\nu)\mathbf{E}-\diag(\alpha_1)\mathbf{I}_1\\
\dot{\mathbf{I}}_2=\diag(\gamma_{1})\mathbf{I}_1-\diag(\alpha_2)\mathbf{I}_2\\
\vdots\\
\dot{\mathbf{I}}_{n-1}=\diag(\gamma_{n-2})\mathbf{I}_{n-2}-\diag(\alpha_{n-1})\mathbf{I}_{n-1}\\
\dot{\mathbf{I}}_{n}=\diag(\gamma_{n-1})\mathbf{I}_{n-1}-\diag(\alpha_{n})\mathbf{I}_{n}\\
\displaystyle\dot{\mathbf{S}}_v= \mathbf{\Lambda}_{v} -\diag(\mathbf{S}_v)\sum_{l=1}^n(A\circ B_l)^T\diag^{-1}(\mathbf{N}_h)\mathbf{I}_l- \diag(\mu_v+\delta_v)\mathbf{S}_v\\
\displaystyle\dot{\mathbf{I}}_v=\diag(\mathbf{S}_v)\sum_{l=1}^n(A\circ B_l)^T\diag^{-1}(\mathbf{N}_h)\mathbf{I}_l- \diag(\mu_v+\delta_v)\mathbf{I}_v,
\end{array}\right.
\end{equation}

where  $\mathbf{S} =[S_1,S_2,\dots,S_m]^T$, $\mathbf{N}_h =[N_1,N_2,\dots,N_m]^T$, $\mathbf{E} =[E_1,E_2,\dots,E_m]^T$,
 $\mathbf{I}_k =[I_{k,1},I_{k,2},\dots,I_{k,m}]^T$ is the vector of infectious at stage $k$ ($1\leq k\leq n$) for all hosts species. Also, $\alpha_{k,i}=\mu_{k,i}+\eta_{k,i}+\gamma_{k,i}$ and $\gamma_{n,i}=0$ for all $i$. 
  For vectors, $\mathbf{S}_v =[S_{v,1},S_{v,2},\dots,S_{v,p}]^T$ and $\mathbf{I}_v =[I_{v,1},I_{v,2},\dots,I_{v,p}]^T$ denote the vectors of susceptible and infected, respectively. The matrices $A$, $B^\diamond$ and $B_l$ are given by
 $$A=\left(\begin{array}{cccc}
a_{1,1} & a_{1,2} & \dots & a_{1,p}\\
a_{2,1} & a_{2,2} & \dots & a_{2,p}\\
\vdots &\vdots&\ddots&\vdots\\
a_{m,1} & a_{m,2} & \dots & a_{m,p}\\ 
\end{array}\right),\;
B^\diamond=({\beta_{i,j}^\diamond})=\left(\begin{array}{cccc}
\beta_{1,1}^\diamond & \beta_{1,2}^\diamond & \dots & \beta_{1,p}^\diamond\\
\beta_{2,1}^\diamond & \beta_{2,2} ^\diamond& \dots & \beta_{2,p}^\diamond\\
\vdots &\vdots&\ddots&\vdots\\
\beta_{m,1}^\diamond & \beta_{m,2}^\diamond & \dots & \beta_{m,p}^\diamond\\ 
\end{array}\right),$$
$$B_k=\left(\begin{array}{cccc}
\beta_{k,1}^1 & \beta_{k,2}^1 & \dots & \beta_{k,p}^1\\
\beta_{k,1}^2 & \beta_{k,2}^2 & \dots & \beta_{k,p}^2\\
\vdots &\vdots&\ddots&\vdots\\
\beta_{k,1}^m & \beta_{k,2}^m & \dots & \beta_{k,p}^m\\ 
\end{array}\right),$$
\begin{figure}[ht]
\centering
\includegraphics[scale =0.65]{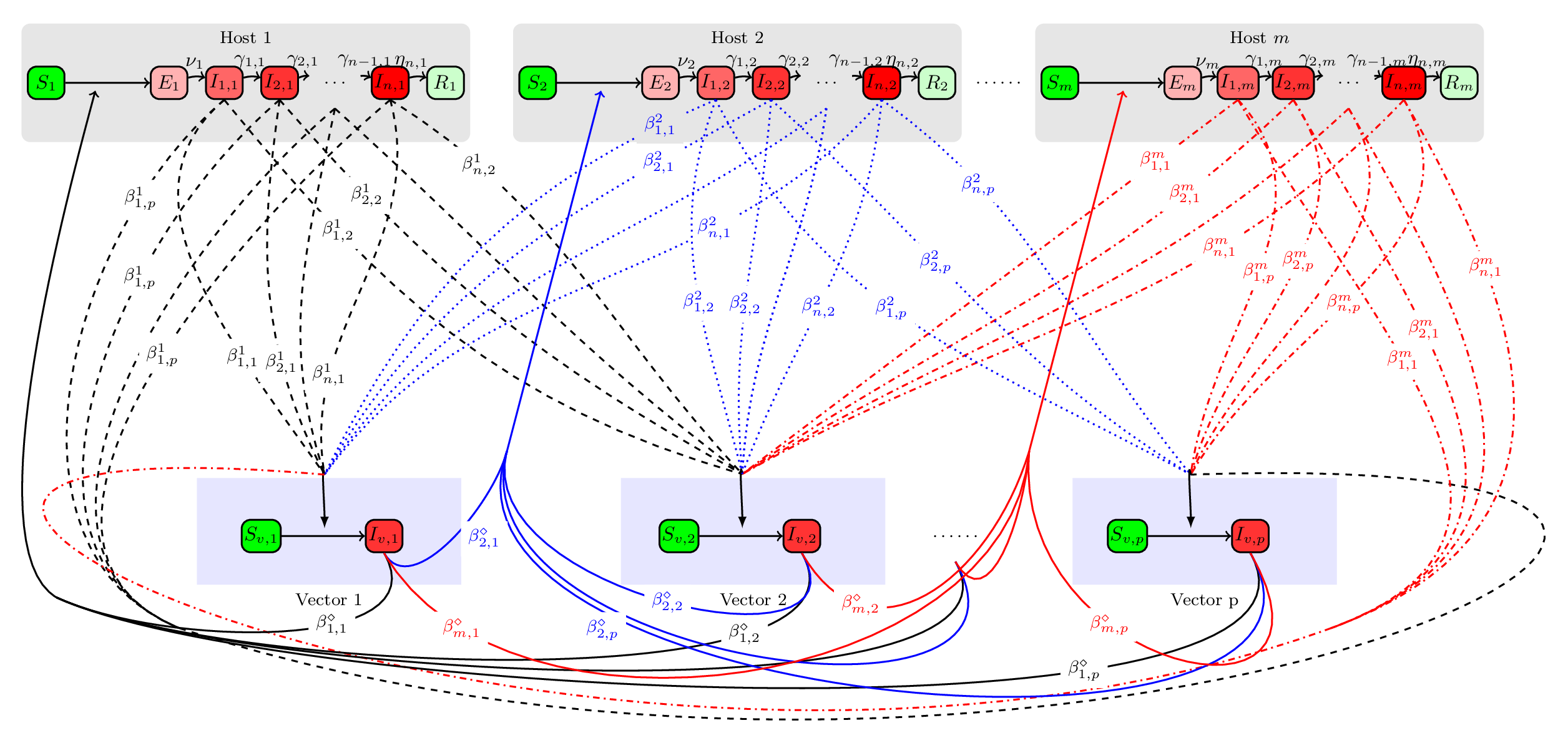}
\caption{Flow diagram of Model \ref{MultiHostMultiVector}. The dashed, dotted and dash-dotted lines capture the infection routes of vectors from all the infectious hosts at different stages. The planes lines captures the infection routes of hosts from the vectors. We did not display the recovery routes from each $I_{k,i}$ (for $k=1,2\dots,n-1$ and $i=1,2,\dots,m$) to $R_i$. }
\label{fig:FlowSEIRMultiHostMultiVector} 
\end{figure}
for $1\leq k\leq n$. $A$ represents the biting/landing rates matrix, $B^\diamond$ captures the infectiousness of host species to vector species; and the matrices $B_k$ represents the infectiousness of vector species from infected host species at stage $k$.

Model (\ref{MultiHostMultiVectorCompact}) describes the dynamics of a Multi-Host and Multi-Vector $SEI^nR-SI$ model where the infectious stage in each host is composed of $n$ stages. The model is flexible and could be adapted to cases where different hosts could have different epidemiological structures with respect to the infection. This extends the work in \cite{BicharaIggidrSmith2017,johnson2016modeling,palmer2018dynamics} which consider the dynamics zoonoses and one vector species. Our model extend also the multi-host and multi-vector in \cite{cruz2012multi} which explores and SIR-SI type of host-vector interaction. Particularly, our model consists of $SEI^nR$ where the infection of each host is stratified into $n$ infectious classes and each of these classes infect susceptible vectors at different rates. This extension captures a more realistic aspect of infection in the interactions between hosts and vectors. Moreover, although we considered an $SI$ structure for the vectors for simplicity, the results of this paper are also valid if the vector species's dynamics follows an $SEI$ structure. Indeed, for some vectors such as mosquitoes, their incubation period is often nearly two weeks, which is on the same scale as their lifespan, making an $SEI$ model more suited for their dynamics.

\section{Basic properties and reproduction number}\label{BasicProperties}

 The set 
%
$$\Omega=\left \{ (\mathbf{S}, \mathbf{E}, \mathbf{I}_1,\dots, \mathbf{I}_n,\mathbf{S}_v,\mathbf{I}_v)\in\R^{m(n+2)+2p}_+\mbox{\large $\mid$} 
\mathbf{S}+\mathbf{E}+\dsum_{i=1}^n\mathbf{I}_i\leq\mathbf{\Lambda_h}\circ\dfrac{1}{\mu}, \;\mathbf{S}_v+\mathbf{I}_v\leq\mathbf{\Lambda_v}\circ\dfrac{1}{\mu_v+\delta_v} \right \} $$
%
where $\circ$ denotes the Hadamard product, is a compact attracting  positively invariant for System (\ref{MultiHostMultiVectorCompact}). Therefore, the solutions of System (\ref{MultiHostMultiVectorCompact}) are biologically  substantiated. The trivial equilibrium of System (\ref{MultiHostMultiVectorCompact}) is the \textit{disease-free equilibrium} (D.F.E) and is given by \mbox{$E_0=\left(\bar{\mathbf{S}}, \mathbf{0}_{m(n+1)},\bar{\mathbf{S}}_v,\mathbf{0}_p
\right)$} where $$\bar{\mathbf{S}}=\mathbf{\Lambda_h}\circ\dfrac{1}{\mu}\quad\textrm{and}\quad \bar{\mathbf{S}}_v=\mathbf{\Lambda_v}\circ\dfrac{1}{\mu_v+\delta_v}.$$

\noindent In the following lemma, we derive the basic reproduction number for Model (\ref{MultiHostMultiVectorCompact}) following \cite{MR1057044,VddWat02}. 
\begin{lemma}\label{R0mnp}
The basic reproduction number of Model (\ref{MultiHostMultiVectorCompact}) is given by:
\begin{multline*}
 \mathcal{R}_0^2(m,n,p)=\rho\left(\frac{}{}\diag(\mathbf{N}_v)\left(\dsum_{l=1}^n(A\circ B_l)^T \diag(\gamma_1\circ\gamma_2\circ\dots\circ\gamma_{l-1}\circ\nu)\right.\right.\\
 \left.\left.\diag^{-1}(( \mu+\nu)\circ\alpha_1\circ\alpha_2\circ\dots\circ\alpha_{l})\frac{}{}\right)
 \diag^{-1}(\mathbf{N}_h)A\circ B^\diamond\diag^{-1}(\mu_v+\delta_v)\frac{}{}\frac{}{}\right),
\end{multline*}
where $\rho(.)$ denotes the spectral radius operator.
\end{lemma}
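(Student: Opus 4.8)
The plan is to apply the next-generation matrix method of van den Driessche and Watmough directly to the infected compartments of System~(\ref{MultiHostMultiVectorCompact}), namely $(\mathbf{E},\mathbf{I}_1,\dots,\mathbf{I}_n,\mathbf{I}_v)$, evaluated at the disease-free equilibrium $E_0$. First I would identify the new-infection vector $\mathcal{F}$: its only nonzero block is the host-infection term $\diag^{-1}(\mathbf{N}_h)\diag(\bar{\mathbf{S}})\,A\circ B^\diamond\,\mathbf{I}_v$ in the $\mathbf{E}$ equation and the vector-infection term $\diag(\bar{\mathbf{S}}_v)\sum_{l=1}^n (A\circ B_l)^T\diag^{-1}(\mathbf{N}_h)\mathbf{I}_l$ in the $\mathbf{I}_v$ equation; all progression, recovery and death terms go into the transition vector $\mathcal{V}$. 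Since $\bar{\mathbf{S}}=\mathbf{\Lambda}_h\circ\frac{1}{\mu}=\mathbf{N}_h$, the factor $\diag^{-1}(\mathbf{N}_h)\diag(\bar{\mathbf{S}})$ collapses to the identity, and similarly $\bar{\mathbf{S}}_v=\mathbf{N}_v$; this is the simplification that makes the final formula clean.

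Next I would write down $F=D\mathcal{F}(E_0)$ and $V=D\mathcal{V}(E_0)$ as block matrices with the ordering above. The matrix $V$ is block lower-triangular along the host chain: the $\mathbf{E}$-diagonal block is $\diag(\mu+\nu)$, the $\mathbf{I}_1$ block is $\diag(\alpha_1)$ with subdiagonal coupling $-\diag(\nu)$ from $\mathbf{E}$, each $\mathbf{I}_k$ block is $\diag(\alpha_k)$ with subdiagonal $-\diag(\gamma_{k-1})$, and the $\mathbf{I}_v$ block is $\diag(\mu_v+\delta_v)$ (decoupled from the host chain in $\mathcal V$). The key computational step is inverting this triangular $V$: by forward substitution one gets that the block of $V^{-1}$ mapping the $\mathbf{E}$-source into $\mathbf{I}_l$ is $\diag(\gamma_1\circ\gamma_2\circ\cdots\circ\gamma_{l-1}\circ\nu)\diag^{-1}((\mu+\nu)\circ\alpha_1\circ\cdots\circ\alpha_l)$ — exactly the operator appearing inside the sum in the statement — and the $\mathbf{I}_v$ block of $V^{-1}$ is $\diag^{-1}(\mu_v+\delta_v)$. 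Then $FV^{-1}$ has a $2\times2$ block structure in the (host-infected)/(vector-infected) splitting with zero diagonal blocks, so its nonzero spectrum equals that of the product of the two off-diagonal blocks: $\diag(\mathbf{N}_v)\big(\sum_{l=1}^n (A\circ B_l)^T\,\diag(\gamma_1\circ\cdots\circ\gamma_{l-1}\circ\nu)\diag^{-1}((\mu+\nu)\circ\alpha_1\circ\cdots\circ\alpha_l)\big)\diag^{-1}(\mathbf{N}_h)\,A\circ B^\diamond\,\diag^{-1}(\mu_v+\delta_v)$, whose spectral radius is $\mathcal{R}_0^2(m,n,p)$. Taking $\mathcal{R}_0=\sqrt{\rho(FV^{-1})}$ (the square on the left-hand side reflecting the host$\to$vector$\to$host two-step cycle) gives the claimed expression.

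The routine parts are the block bookkeeping and the forward-substitution inversion of the bidiagonal $V$; the one place to be careful is checking the van den Driessche--Watmough sign/admissibility hypotheses (that $\mathcal V$ has the required $M$-matrix structure and that $\mathcal F\ge 0$ with the new infections correctly separated from mere transfers between infected classes), which guarantees $\rho(FV^{-1})$ is the threshold. The main obstacle is simply organizing the $n$-stage chain so that the telescoping product $\gamma_1\circ\cdots\circ\gamma_{l-1}\circ\nu$ over $((\mu+\nu)\circ\alpha_1\circ\cdots\circ\alpha_l)$ emerges transparently; once the pattern for $V^{-1}$ is established by induction on the stage index, the rest is immediate.
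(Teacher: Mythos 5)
Your proposal is correct and takes essentially the same route as the paper: the van den Driessche--Watmough splitting with new infections only in the $\mathbf{E}$ and $\mathbf{I}_v$ equations, the observation that $\bar{\mathbf{S}}=\mathbf{N}_h$ and $\bar{\mathbf{S}}_v=\mathbf{N}_v$, forward-substitution inversion of the block-bidiagonal transition matrix to obtain the blocks $\diag(\gamma_1\circ\cdots\circ\gamma_{l-1}\circ\nu)\,\diag^{-1}((\mu+\nu)\circ\alpha_1\circ\cdots\circ\alpha_l)$, and exploitation of the anti-block-diagonal structure of the next-generation matrix. One small precision: for an anti-block-diagonal matrix with off-diagonal blocks $X$ and $Y$, the nonzero eigenvalues are the square roots of those of $XY$ rather than equal to them, so the clean bookkeeping is $\mathcal{R}_0=\rho(FV^{-1})$ (the standard definition) together with $\mathcal{R}_0^2=\rho\bigl((FV^{-1})^2\bigr)=\rho(XY)$; your non-standard definition $\mathcal{R}_0=\sqrt{\rho(FV^{-1})}$ exactly compensates for that mis-stated spectral identity, so the final formula is unaffected.
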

\begin{proof}\hfill

\noindent Following the next generation method \cite{MR1057044,VddWat02}, the system composed of the infected variables in (\ref{MultiHostMultiVectorCompact}) could be decomposed as the sum of two columns vectors  $\mathcal F(\mathbf{E},\mathbf{I})$ and $\mathcal V(\mathbf{E},\mathbf{I})$ where $\mathcal F$ represents new infections in each host and arthropod species and $\mathcal V$ that of transitions between classes. By letting $F=D\mathcal F(\mathbf{E},\mathbf{I})|_{E_0}$ and $V=D\mathcal V(\mathbf{E},\mathbf{I})|_\mathrm{E_0}$, the Jacobian of $\mathcal F$ and $\mathcal V$, evaluated at the DFE, we obtain :
$$F=\left(\begin{array}{ccccc}
\block(1,2){\textbf{0}_{m(n+1),m(n+1)} }& A\circ B^\diamond\\
\textbf{0}_{p,m} &\diag(\mathbf{N}_v)\mathcal B \diag^{-1}(\mathbf{N}_h)& \textbf{0}_{mn+p,p}
\end{array}\right)$$
where $$\mathcal B=\left(\begin{array}{cccc}(A\circ B_1)^T& (A\circ B_2)^T&\dots &(A\circ B_n)^T\end{array}\right),$$
and
$$
V 
=\left(\begin{array}{ccccc}
-\textrm{diag}( \mu+\nu)& \textbf{0}_{m,m}& \hdots &\textbf{0}_{m,m}&\textbf{0}_{m,p}\\ 
\textrm{diag}(\nu)  & -\textrm{diag}(\alpha_1) &  \hdots&\textbf{0}_{m,m}&\textbf{0}_{m,p}\\
\textbf{0}_{m,m}& \textrm{diag}(\gamma_1)   &\ddots&\vdots&\vdots\\
\vdots & \ddots & \ddots&  \vdots&\vdots\\
\textbf{0}_{m,m}& \textbf{0}_{m,m}&\textrm{diag}(\gamma_{n-1})  & -\textrm{diag}(\alpha_n)&\textbf{0}_{m,p}\\
\textbf{0}_{p,p}& \textbf{0}_{p,p}&\textbf{0}_{p,p}&\textbf{0}_{p,p}  & -\textrm{diag}(\mu_v+\delta_v)\\
\end{array}\right).$$
\noindent It is worth noticing that the matrix $\mathcal B$ has $m\times n$ columns and so, the last row of $F$ has $mn+m+p=m(n+1)+p$ columns, in accordance with the first row of $F$, as $A\circ B^\diamond$ has $p$ columns. The matrix $V$ is Metzler (positive off-diagonal entries) and therefore $-V^{-1}\geq0$ \cite{0815.15016}.
Given the particular shape of the matrix $V$, the matrix-wise entries of $-V^{-1}$ are given by the induction relationship:
$$(-V^{-1})_{ij}=
\left\{\begin{array}{c}
\textbf{0}_{m,m} \quad\textrm{if}\quad 1\leq i\leq n+1,\; 1\leq i<j\leq n+1\\
\textrm{diag}^{-1}(\alpha_{i-1}) \quad\textrm{if}\quad 1\leq i=j\leq n+1,\\
\textrm{diag}(\gamma_{i-2})\textrm{diag}^{-1}(\alpha_i)(-V^{-1})_{i-1,j} \quad\textrm{if}\quad  2\leq i\leq n, \;1\leq j<i\leq n,\\
\textbf{0}_{p,p} \quad\textrm{if}\quad i=n+2,\; 1\leq j\leq n+1,\\
\textbf{0}_{p,p} \quad\textrm{if}\quad 1\leq i\leq n+1,\;  j=n+2,\\
\textrm{diag}^{-1}(\mu_v+\delta_v) \quad\textrm{if}\quad i=j=n+2,\\
\end{array}\right.
$$
where $\alpha_0=\mu+\nu$ and  $\gamma_0=\nu$. Hence, we can deduce that
{\small{
$$(-V^{-1})_{ij}=
\left\{\begin{array}{llllll}
\textbf{0}_{m,m} \quad\textrm{if}\quad 1\leq i\leq n+1,\; 1\leq i<j\leq n+1\\
\textrm{diag}^{-1}(\alpha_{i-1}) \quad\textrm{if}\quad 1\leq i=j\leq n+1,\\
\diag(\gamma_{j-1}\circ\dots\circ\gamma_{i-2})\diag^{-1}(\alpha_{j-1}\circ\dots\circ\alpha_{i-1})\;\textrm{if}\;2\leq i\leq n, \;1\leq j<i\leq n\\
\textbf{0}_{p,p} \quad\textrm{if}\quad i=n+2,\; 1\leq j\leq n+1,\\
\textbf{0}_{p,p} \quad\textrm{if}\quad  1\leq i\leq n+1,\; j=n+2,\\
\textrm{diag}^{-1}(\mu_v+\delta_v) \quad\textrm{if}\quad i=j=n+2.\\
\end{array}\right.
$$
}}
Hence, the next generation matrix is given by:

\[
-FV^{-1}=\left(\begin{array}{ccccc|c}
 &   &  & & &A\circ B^\diamond\diag^{-1}(\mu_v+\delta_v)\\ 
\multicolumn{5}{c|}{\multirow{3}{*}{\raisebox{-7mm}{\scalebox{2}{$\textbf{0}_{m(n+1),m(n+1)}$}}}} &\textbf{0}_{m,p}\\
& & & & &\vdots\\
  &   &  &  & &\vdots\\
 & & &   & &\textbf{0}_{m,p}\\
\hline
\Theta_1&\Theta_2&\Theta_3&\hdots&\Theta_{n+1}  & \textbf{0}_{p,p}\\
\end{array}\right)
\]
\noindent where 
$\Theta_1=\diag(\mathbf{N}_v)\left(\dsum_{l=1}^n(A\circ B_l)^T (-V^{-1})_{l+1,1}\right)\diag^{-1}(\mathbf{N}_h)$ and, 
$$\Theta_{k+1}=\diag(\mathbf{N}_v)\left(\dsum_{l=k}^n(A\circ B_l)^T (-V^{-1})_{l+1,k+1}\right)\diag^{-1}(\mathbf{N}_h)\;\;\textrm{for}\;\; k=1,2,\dots,n.$$
The basic reproduction number is the spectral of the next generation matrix. Hence, by denoting $\mathcal R_0(m,n,p)$ the basic reproduction number of System (\ref{MultiHostMultiVectorCompact}) with $n$ hosts, $n$ infectious stages (in each host) and $p$ vectors, we have: $\mathcal R_0(m,n,p)=\rho(-FV^{-1})$. The matrix $-FV^{-1}$ is an anti-diagonal block matrix, with $\Theta_1$ and $A\circ B^\diamond\diag^{-1}(\mu_v+\delta_v)$ the anti-diagonal entries. Therefore,
\begin{eqnarray}
 \mathcal{R}_0^2(m,n,p)&=&\rho((-FV^{-1})^2)\nonumber\\
 &=&\rho\left(\Theta_1A\circ B^\diamond\diag^{-1}(\mu_v+\delta_v)\right)\nonumber\\
 &=&\rho\left(\diag(\mathbf{N}_v)\left(\dsum_{l=1}^n(A\circ B_l)^T (-V^{-1})_{l+1,1}\right)\diag^{-1}(\mathbf{N}_h)A\circ B^\diamond\diag^{-1}(\mu_v+\delta_v)\right)\nonumber\\
  &=&\rho\left(\frac{}{}\diag(\mathbf{N}_v)\left(\dsum_{l=1}^n(A\circ B_l)^T \diag(\gamma_1\circ\gamma_2\circ\dots\circ\gamma_{l-1}\circ\nu)\diag^{-1}(( \mu+\nu)\right.\right.\nonumber\\
  & &\left.\left.\circ\alpha_1\circ\alpha_2\circ\dots\circ\alpha_{l})\right)\diag^{-1}(\mathbf{N}_h)A\circ B^\diamond\diag^{-1}(\mu_v+\delta_v)\frac{}{}\right).\nonumber
 \end{eqnarray}
This proves our claim.
\end{proof}
\noindent In the next section, we study the asymptotic properties of System (\ref{MultiHostMultiVectorCompact}). Particularly, we show that the dynamics of the system is completely determined by $ \mathcal{R}_0^2(m,n,p)$ under certain conditions. 
\section{Global stability of equilibria}\label{GASEquilibria}
\subsection{The Disease Free Equilibrium}
The disease-free equilibrium (DFE) always exists and is in $\Omega$. The following theorem gives conditions under which it is globally asymptotically stable (GAS).
\begin{theorem}\label{GASDFE}
The DFE is GAS whenever $ \mathcal R_0^2(m,n,p)<1$.
\end{theorem}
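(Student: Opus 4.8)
The plan is to dominate the infected part of \eqref{MultiHostMultiVectorCompact} by a Metzler linear system and then run a comparison argument. Collect the latent and infectious host variables together with the infectious vectors into $\mathbf{x}=(\mathbf{E}^{T},\mathbf{I}_1^{T},\dots,\mathbf{I}_n^{T},\mathbf{I}_v^{T})^{T}\in\R^{m(n+1)+p}_+$, and keep the matrices $F$ and $V$ built in the proof of \Cref{R0mnp}, so that the linearization of the $\mathbf{x}$-subsystem at $E_0$ is $\dot{\mathbf{x}}=(F+V)\mathbf{x}$. On $\Omega$ one has $\mathbf{S}\le\mathbf{N}_h$ and $\mathbf{S}_v\le\mathbf{N}_v=\bar{\mathbf{S}}_v$ componentwise, so the diagonal matrices $\diag^{-1}(\mathbf{N}_h)\diag(\mathbf{S})$ and $\diag^{-1}(\mathbf{N}_v)\diag(\mathbf{S}_v)$ have entries in $[0,1]$; since the incidence vectors $(A\circ B^{\diamond})\mathbf{I}_v$ and $\sum_{l}(A\circ B_l)^{T}\diag^{-1}(\mathbf{N}_h)\mathbf{I}_l$ are nonnegative, replacing $\mathbf{S}$ by $\mathbf{N}_h$ in the $\mathbf{E}$-equation and $\mathbf{S}_v$ by $\mathbf{N}_v$ in the $\mathbf{I}_v$-equation (the $\mathbf{I}_1,\dots,\mathbf{I}_n$ equations being already linear in $\mathbf{x}$) gives the componentwise differential inequality $\dot{\mathbf{x}}\le(F+V)\mathbf{x}$ along every trajectory contained in $\Omega$.

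The matrix $M:=F+V$ is Metzler; moreover $F\ge0$ and $-V$ is a nonsingular $M$-matrix (indeed $-V^{-1}\ge0$ was already displayed). By the threshold characterization underlying the next-generation construction \cite{VddWat02}, $\rho(-FV^{-1})<1$ if and only if the spectral abscissa $s(M)$ (the largest real part of an eigenvalue of $M$) is negative. Since $\mathcal{R}_0^2(m,n,p)=\rho\bigl((-FV^{-1})^{2}\bigr)=\bigl(\rho(-FV^{-1})\bigr)^{2}$, the hypothesis $\mathcal{R}_0^2(m,n,p)<1$ means precisely $\rho(-FV^{-1})<1$; hence $M$ is a Hurwitz Metzler matrix, and the cooperative linear system $\dot{\mathbf{y}}=M\mathbf{y}$ is globally asymptotically stable at the origin. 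By continuity of $s(\cdot)$ the same holds for any sufficiently small entrywise-nonnegative perturbation of $M$, which handles trajectories that do not start inside the attracting set $\Omega$: apply the argument below after the trajectory has entered a small neighbourhood of $\Omega$, with $M$ replaced by such a perturbation.

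Now invoke the comparison principle for cooperative systems: with $\mathbf{y}(0)=\mathbf{x}(0)\ge\mathbf{0}$, the order-preserving flow of $\dot{\mathbf{y}}=M\mathbf{y}$ gives $\mathbf{0}\le\mathbf{x}(t)\le\mathbf{y}(t)$ for all $t\ge0$, whence $\mathbf{x}(t)\to\mathbf{0}$, i.e.\ $\mathbf{E},\mathbf{I}_1,\dots,\mathbf{I}_n,\mathbf{I}_v\to\mathbf{0}$. With the infected components vanishing, the $\mathbf{S}$- and $\mathbf{S}_v$-equations of \eqref{MultiHostMultiVectorCompact} become asymptotically autonomous with limiting equations $\dot{\mathbf{S}}=\mathbf{\Lambda}_h-\diag(\mu_h)\mathbf{S}$ and $\dot{\mathbf{S}}_v=\mathbf{\Lambda}_v-\diag(\mu_v+\delta_v)\mathbf{S}_v$, whose equilibria $\bar{\mathbf{S}}$ and $\bar{\mathbf{S}}_v$ are globally asymptotically stable; by the theory of asymptotically autonomous systems \cite{CasThi95,0478.93044}, $\mathbf{S}(t)\to\bar{\mathbf{S}}$ and $\mathbf{S}_v(t)\to\bar{\mathbf{S}}_v$. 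Thus $E_0$ attracts $\Omega$, hence the whole nonnegative cone; Lyapunov stability of $E_0$ follows from the same domination bound (a small $\mathbf{x}(0)$ keeps $\mathbf{y}(t)$, hence $\mathbf{x}(t)$, small, after which the nonnegative incidence terms keep $\mathbf{S},\mathbf{S}_v$ trapped near $\bar{\mathbf{S}},\bar{\mathbf{S}}_v$), so $E_0$ is GAS. A fully equivalent route, closer to the Lyapunov-function proofs used later in the paper, replaces the comparison step by $\mathcal{L}=\mathbf{c}^{T}\mathbf{x}$, where $\mathbf{c}$ is a strictly positive vector with $\mathbf{c}^{T}(F+V)$ strictly negative entrywise (such a $\mathbf{c}$ exists precisely because $F+V$ is Hurwitz and Metzler); then $\dot{\mathcal{L}}\le\mathbf{c}^{T}(F+V)\mathbf{x}\le0$ on $\Omega$ with equality only on $\{\mathbf{x}=\mathbf{0}\}$, and LaSalle's invariance principle finishes the argument.

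The substantive work is all in the sign bookkeeping rather than in any new estimate: one needs only nonnegativity of the incidence terms, the bounds $\mathbf{S}\le\mathbf{N}_h$ and $\mathbf{S}_v\le\mathbf{N}_v$ on $\Omega$, and the sign and $M$-matrix structure of $F$ and $V$, so the arbitrary numbers $m$, $n$, $p$ cause no genuine difficulty here. The two points that do require care are (i) that the linear domination is valid verbatim only on the positively invariant attracting region $\Omega$, handled by the $\varepsilon$-perturbation of $F$ and continuity of the spectral abscissa, and (ii) that global attractivity must be upgraded to Lyapunov stability, which is immediate from the comparison bound as indicated above.
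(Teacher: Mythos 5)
Your proposal is correct and follows essentially the same route as the paper: bound the infected subsystem by the cooperative linear system $\dot{\mathbf{x}}=(F+V)\mathbf{x}$ using $\mathbf{S}\le\mathbf{N}_h$ and $\mathbf{S}_v\le\mathbf{N}_v$, invoke the equivalence $\rho(-FV^{-1})<1\iff\alpha(F+V)<0$, and conclude by comparison that the infected compartments vanish and the susceptible ones relax to $\bar{\mathbf{S}},\bar{\mathbf{S}}_v$. The only cosmetic differences are that the paper obtains Lyapunov stability by citing the local stability result of van den Driessche--Watmough rather than re-deriving it from the comparison bound, and it works directly on the attracting set $\Omega$ without your $\varepsilon$-perturbation detour.
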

\begin{proof}
\hfill\\
The proof consists of proving that all infected variables of System (\ref{MultiHostMultiVector}) converge to zero and using the local stability of the DFE when $ \mathcal R_0^2(m,n,p)<1$ to conclude the GAS of the disease-free steady state.
Considering that $\diag(\mathbf{S}_h)\leq\diag(\mathbf{N}_h)$ and
\begin{eqnarray}
\displaystyle\dot{\mathbf{I}}_v&=&\diag(\mathbf{S}_v)\sum_{l=1}^n(A\circ B_l)^T\diag^{-1}(\mathbf{N}_h)\mathbf{I}_l- \diag(\mu_v+\delta_v)\mathbf{I}_v\nonumber\\
&=&\diag(\mathbf{N}_v-\mathbf{I}_v)\sum_{l=1}^n(A\circ B_l)^T\diag^{-1}(\mathbf{N}_h)\mathbf{I}_l- \diag(\mu_v+\delta_v)\mathbf{I}_v\nonumber\\
&\leq&\diag(\mathbf{N}_v)\sum_{l=1}^n(A\circ B_l)^T\diag^{-1}(\mathbf{N}_h)\mathbf{I}_l- \diag(\mu_v+\delta_v)\mathbf{I}_v,\nonumber
\end{eqnarray}
we obtain, from System (\ref{MultiHostMultiVectorCompact}), that
$$ 
\left\{\begin{array}{llll}%
\dot{\mathbf{E}}\leq A\circ B^\diamond \mathbf{I}_v-\diag(\mu_h+\nu_h)\mathbf{E}\\
\dot{\mathbf{I}}_1=\diag(\nu)\mathbf{E}-\diag(\alpha)\mathbf{I}_1\\
\dot{\mathbf{I}}_2=\diag(\gamma_{1})\mathbf{I}_1-\diag(\alpha_1)\mathbf{I}_2\\ 
\vdots\\
\dot{\mathbf{I}}_{n-1}=\diag(\gamma_{n-2})\mathbf{I}_{n-2}-\diag(\alpha_{n-1})\mathbf{I}_{n-1}\\
\dot{\mathbf{I}}_{n}=\diag(\gamma_{n-1})\mathbf{I}_{n-1}-\diag(\alpha_{n})\mathbf{I}_{n}\\
\displaystyle\dot{\mathbf{I}}_v\leq\diag(\mathbf{N}_v)\sum_{l=1}^n(A\circ B_l)^T\diag^{-1}(\mathbf{N}_h)\mathbf{I}_l- \diag(\mu_v+\delta_v)\mathbf{I}_v
\end{array}\right.
$$ 
Therefore, we deduce that
\begin{equation}
\label{Compa2}
\left(\begin{array}{c} 
\dot{\mathbf{E}}\\
\dot{\mathbf{I}}_1\\
\vdots\\
\dot{\mathbf{I}}_n\\
\dot{\mathbf{I}}_v\\
\end{array}\right)\leq (F+V)\left(\begin{array}{c}
\mathbf{E}\\
\mathbf{I}_1\\
\vdots\\
\mathbf{I}_n\\
\mathbf{I}_v\\
\end{array}\right)
\end{equation}
where $F$ and $V$ are the matrices generated in the next generation method. Since $F$ is a nonnegative matrix and $V$ is a Metzler matrix, we have (see \cite{0815.15016}), 
$$\rho(-FV^{-1})<1\iff \alpha(F+V)<0$$
where $\alpha(F+V)$ is the stability modulus of $F+V$. Hence, the trajectories of the auxiliary system whose RHS is that of (\ref{Compa2}) converge to zero whenever $ \mathcal R_0(m,n,p)=\rho(-FV^{-1})<1$. Since all the variables are positive, by the comparison theorem \cite{cite-SmitWalt95}, we conclude that
 \begin{equation}\label{Ito0}
 \lim_{t\to\infty}\mathbf{E}=\lim_{t\to\infty}\mathbf{I}_1=\dots=\lim_{t\to\infty}\mathbf{I}_n=\mathbf{0}_m\quad\textrm{and}\quad\lim_{t\to\infty}\mathbf{I}_v=\mathbf{0}_p.
 \end{equation}
Moreover, this implies that:
 \begin{equation}\label{StoSstar}
 \lim_{t\to\infty}\mathbf{S}=\mathbf{\Lambda}_h\oslash{\mu_h}\quad\textrm{and}\quad\lim_{t\to\infty}\mathbf{S}_v=\mathbf{\Lambda}_v\oslash{(\mu_h+\delta_v)},
 \end{equation}
 where $\oslash$ denotes the Hadamard division. Relations (\ref{Ito0}) and (\ref{StoSstar}) imply the attractivity of the DFE. Moreover, by \cite{MR1057044,VddWat02}, the DFE is locally asymptotically stable whenever $\mathcal R_0^2(m,n,p)<1$. We conclude thus that the DFE is GAS whenever $\mathcal R_0^2(m,n,p)<1$.   
\end{proof}
\begin{remark}
In \Cref{GASDFE}, it is worthwhile noting that no hypothesis on the irreducibility of the connectivity matrix is needed. This is important as this hypothesis is customarily used \cite{bichara2017multi,Guo_li_CAMQ_06,iggidr2014dynamics,shuai2011global} to obtain a Lyapunov function in the form of $V=c^TI$ where $c$ is the left eigenvector of the next generation matrix.
\end{remark}
 \subsection{Existence and Uniqueness of the EE}
 In this subsection, we explore the asymptotic behavior of System (\ref{MultiHostMultiVectorCompact}) when $\mathcal{R}^2_0(m,n,p)>1$. Particularly, we want to obtain conditions for which the disease persists in all stages, of all hosts and all vectors. That is, an equilibrium such that $$I_{l,i}^\ast\gg0\quad I_{v,j}^\ast\gg0\quad\textrm{for all}\quad l,i,j.$$
 Such interior equilibrium is also called \textit{strongly} endemic \cite{MR1993355}. The existence of such equilibrium is tied to the overall basic reproduction number and the connectivity between host and vector species. That is, the matrix $$\mathcal N=\left(\begin{array}{cccc}
\textbf{0}_{m(n+1),m(n+1)} & A\circ B^\diamond\\
\dsum_{l=1}^n(A\circ B_l)^T  & \textbf{0}_{mn+p,p}
\end{array}\right).$$
The following theorem gives the existence conditions of such an equilibrium.
 
 \begin{theorem}\label{ExistenceUniquenessEE}
 There exists a unique strongly endemic equilibrium for Model (\ref{MultiHostMultiVectorCompact}) whenever $\mathcal{R}^2_0(m,n,p)>1$ and the Host-Vector connectivity configuration  $\mathcal N$ is irreducible.
 \end{theorem}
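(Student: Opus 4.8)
The plan is to collapse the equilibrium equations of (\ref{MultiHostMultiVectorCompact}) onto a single fixed-point equation for the infected-vector vector $\mathbf{I}_v$, and then to exhibit an auxiliary cooperative ODE whose global convergence delivers both existence and uniqueness of the strongly endemic state. The first step exploits the triangular structure of the host block: at any equilibrium, $\mathbf{E}$ is an affine function of $\diag(\mathbf{S})\,A\circ B^\diamond\mathbf{I}_v$, each $\mathbf{I}_k$ is a fixed positive diagonal multiple of $\mathbf{E}$ built from $\nu$, the $\gamma$'s and the $\alpha$'s (so $\mathbf{I}_k=\diag(c_k)\mathbf{E}$ with $c_k\gg\mathbf 0$), and hence $R_i$ and $S_i$ are determined. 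Writing $\Xi_i:=(A\circ B^\diamond\mathbf{I}_v)_i$ and summing the host compartments gives $N_i=S_i^\ast\bigl(1+d_i\Xi_i^\ast/N_i\bigr)$ for an explicit constant $d_i>0$, hence $S_i^\ast/N_i=\varphi_i(\Xi_i^\ast):=N_i/(N_i+d_i\Xi_i^\ast)$; likewise $S_{v,j}^\ast=N_{v,j}-I_{v,j}^\ast$. Substituting into $\dot{\mathbf I}_v=0$ leaves $\mathbf I_v^\ast=T(\mathbf I_v^\ast)$ on the box $K:=\prod_{j}[0,N_{v,j}]$, where, componentwise,
\begin{equation*}
T_j(\mathbf I_v)=\frac{N_{v,j}\,H_j(\mathbf I_v)}{(\mu_{v,j}+\delta_{v,j})+H_j(\mathbf I_v)},\qquad H_j(\mathbf I_v)=\sum_{i=1}^{m}g_{ij}\,\varphi_i(\Xi_i)\,\Xi_i,
\end{equation*}
with $g_{ij}>0$ assembled from $a_{ij}$, the $\beta^{i}_{l,j}$'s, the $c_{l,i}$'s and $N_i$. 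Strongly endemic equilibria correspond exactly to fixed points $\mathbf I_v^\ast\gg\mathbf 0$ of $T$ in $K$.

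The second step is to establish the structural properties of $T$: it fixes $\mathbf 0$, maps $K$ into $K$, is cooperative (monotone nondecreasing) and strictly subhomogeneous on $K$ (i.e.\ $T(\lambda\mathbf x)>\lambda T(\mathbf x)$ for $0<\lambda<1$, $\mathbf 0<\mathbf x\le\mathbf N_v$), and has irreducible derivative $DT(\mathbf 0)$. Monotonicity and concavity follow because $\Xi$ is linear nondecreasing in $\mathbf I_v$, the scalar map $s\mapsto\varphi_i(s)\,s=N_i s/(N_i+d_i s)$ is increasing and strictly concave with value $0$ at $0$, and $u\mapsto N_{v,j}u/((\mu_{v,j}+\delta_{v,j})+u)$ is likewise; composition and nonnegative combination preserve these properties. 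The off-diagonal support of $DT(\mathbf 0)$ is precisely the host–vector bipartite incidence pattern of $\mathcal N$, so the hypothesis that $\mathcal N$ is irreducible makes $DT(\mathbf 0)$ irreducible. Linearising at $\mathbf 0$ (where $\varphi_i\equiv1$) reproduces the next-generation bookkeeping of \Cref{R0mnp}: $DT(\mathbf 0)$ is similar to $\Theta_1\,A\circ B^\diamond\diag^{-1}(\mu_v+\delta_v)$, so $\rho(DT(\mathbf 0))=\mathcal R_0^2(m,n,p)$, and the standing hypothesis becomes $\rho(DT(\mathbf 0))>1$.

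With these in place I would invoke the theory of monotone, strictly subhomogeneous maps on $\R^p_+$ with irreducible derivative at $\mathbf 0$ — equivalently, the Krasnoselskii–Hirsch theory for cooperative concave flows applied to the auxiliary system $\dot{\mathbf I}_v=T(\mathbf I_v)-\mathbf I_v$ on $K$ (the ``newly crafted'' dynamical system): since $\rho(DT(\mathbf 0))>1$, $T$ has exactly one fixed point $\mathbf I_v^\ast\gg\mathbf 0$ in $K\setminus\{\mathbf 0\}$ and every trajectory of the auxiliary flow starting in $K\setminus\{\mathbf 0\}$ converges to it; in particular a strongly endemic $\mathbf I_v^\ast$ exists and is unique. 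Back-substituting through the reduction, irreducibility of $\mathcal N$ forces $\Xi_i^\ast>0$ for every host $i$, hence $E_i^\ast>0$, $I_{l,i}^\ast=c_{l,i}E_i^\ast>0$ for all $l,i$, $S_i^\ast=\varphi_i(\Xi_i^\ast)N_i\in(0,N_i)$ and $S_{v,j}^\ast=N_{v,j}-I_{v,j}^\ast>0$; since the back-substitution is a bijection, uniqueness of $\mathbf I_v^\ast$ yields uniqueness of the strongly endemic equilibrium. This is consistent with \Cref{GASDFE}: when $\mathcal R_0^2(m,n,p)<1$ the only fixed point of $T$ in $K$ is $\mathbf 0$.

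I expect the main obstacle to be making the structural claims of the second step rigorous in the multi-stage, multi-host setting: verifying that, after composing the saturating host response $\varphi_i$ with the saturating vector response, the reduced map $T$ is genuinely monotone and \emph{strictly} subhomogeneous on all of $K$ (not merely near $\mathbf 0$), and that irreducibility of $\mathcal N$ truly promotes $DT(\mathbf 0)$ to an irreducible matrix — these are exactly what make the threshold $\rho(DT(\mathbf 0))=1$ sharp and the positive fixed point unique and globally attracting for the auxiliary flow. A secondary technicality is the bookkeeping identity $\rho(DT(\mathbf 0))=\mathcal R_0^2(m,n,p)$, since the $\mathbf I_v$-reduction organises the infected classes differently from the next-generation construction of \Cref{R0mnp}.
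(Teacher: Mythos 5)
Your proposal is correct and follows essentially the same route as the paper: the paper likewise collapses the equilibrium system onto a single equation in $\mathbf I_v$, recasts it as an auxiliary cooperative ODE on $[0,\mathbf N_v]$ whose vector field has a Metzler, irreducible, concave (decreasing-derivative) structure, and invokes Hirsch's dichotomy together with the identity $\rho(\diag^{-1}(\mu_v+\delta_v)F'(\mathbf 0))=\mathcal R_0^2(m,n,p)$ to rule out the origin. The only cosmetic difference is that you solve explicitly for $I_{v,j}$ to obtain a monotone subhomogeneous fixed-point map $T$, whereas the paper keeps the implicit form $\diag(\mu_v+\delta_v)\mathbf I_v=F(\mathbf I_v)$ and verifies the Metzler and concavity conditions on $F$ directly.
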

\begin{proof} 
An endemic equilibrium satisfies the relations:
  \begin{equation} \label{EErelations}
\left\{\begin{array}{llll}
\mathbf{\Lambda}_{h}=  \diag^{-1}(\mathbf{N}_h)\diag(\mathbf{S}^*)A\circ B^\diamond \mathbf{I}_v^*+\diag(\mu_h) \mathbf{S}^*\\
\diag(\mu_h+\nu_h)\mathbf{E}^*=\diag^{-1}(\mathbf{N}_h)\diag(\mathbf{S}^*)A\circ B^\diamond \mathbf{I}_v^*\\
\diag(\alpha_1)\mathbf{I}_1^*=\diag(\nu)\mathbf{E}^*\\
\diag(\alpha_2)\mathbf{I}_2^*=\diag(\gamma_{1})\mathbf{I}_1^*\\ 
\vdots\\
\diag(\alpha_{n-1})\mathbf{I}_{n-1}^*=\diag(\gamma_{n-2})\mathbf{I}_{n-2}^*\\
\diag(\alpha_{n})\mathbf{I}_{n}^*=\diag(\gamma_{n-1})\mathbf{I}_{n-1}^*\\
 \mathbf{\Lambda}_{v}= \diag(\mathbf{S}_v^*)\sum_{l=1}^n(A\circ B_l)^T\diag^{-1}(\mathbf{N}_h)\mathbf{I}_l^*+ \diag(\mu_v+\delta_v)\mathbf{S}_v^*\\
\diag(\mu_v+\delta_v)\mathbf{I}_v^*=\diag(\mathbf{S}_v^*)\sum_{l=1}^n(A\circ B_l)^T\diag^{-1}(\mathbf{N}_h)\mathbf{I}_l^*  
\end{array}\right.
\end{equation}
In the following, we express all variables at equilibrium in terms of $\mathbf{I}_v^*$. From (\ref{EErelations}), we express all $
\mathbf{I}_l^*$ ($1\leq l\leq n$) in terms of $\mathbf{I}_1^*$, as follows:
\begin{equation}\label{Il}
\mathbf{I}_l^*=\diag(\gamma_1\circ\gamma_2\circ\dots\circ\gamma_{l-1})\diag^{-1}(\alpha_2\circ\dots\circ\alpha_{l})\mathbf{I}_1^*.
\end{equation}
From (\ref{Il}), it follows that $\mathbf{I}_l^*\gg0$, for any $1\leq l\leq n$, if and only $\mathbf{I}_1^*\gg0$. Moreover, we can also express $\mathbf{I}_1^*$ in terms of $\mathbf{I}_v^*$ and $\mathbf{S}^*$ since
\begin{eqnarray}\label{I1inTermsofIv}
\mathbf{I}_1^*&=&\diag^{-1}(\alpha_1)\diag(\nu)\mathbf{E}^*\nonumber\\
& =&\diag^{-1}(\alpha_1)\diag^{-1}(\mu_h+\nu_h)\diag(\nu)\diag^{-1}(\mathbf{N}_h)\diag(\mathbf{S}^*)A\circ B^\diamond \mathbf{I}_v^*
\end{eqnarray}
Using the equilibrium relation stemming from the equation of infected vectors (in \ref{EErelations}), the relation $\mathbf{S}_v^*=\mathbf{N}_v-\mathbf{I}_v^*$, along with (\ref{Il}) and (\ref{I1inTermsofIv}). we obtain
\begin{equation}\label{IvinTermsofSh}
\diag(\mu_v+\delta_v)\mathbf{I}_v^*=\diag(\mathbf{N}_v-\mathbf{I}_v^*)\sum_{l=1}^n(A\circ B_l)^TD_l\diag(\mathbf{S}^*)A\circ B^\diamond \mathbf{I}_v^*,
\end{equation}
where
$$D_l=\diag^{-1}(\mathbf{N}_h)\diag(\gamma_0\circ\gamma_1\circ\gamma_2\circ\dots\circ\gamma_{l-1})\diag^{-1}(\alpha_0\alpha_1\circ\alpha_2\circ\dots\circ\alpha_{l})\diag^{-1}(\mathbf{N}_h)$$
and $\alpha_0=\mu_h+\nu_h$ and $\gamma_0=\nu$. Now, we want to express $\mathbf{S}$ in terms of $\mathbf{I}_v^*$. That will make  (\ref{IvinTermsofSh}) in terms of $\mathbf{I}_v^*$ only. By using the equation of susceptible in (\ref{EErelations}), we obtain
$$
\diag(\mathbf{S}^*)=\diag(\mathbf{N}_h)\diag^{-1}(A\circ B^\diamond\mathbf{I}^*_v)+\diag^{-1}(\mu).
$$
Hence, Equation (\ref{IvinTermsofSh}) leads to:
\begin{eqnarray}\label{IvOnly}
\diag(\mu_v+\delta_v)\mathbf{I}_v^*&=&\diag(\mathbf{N}_v-\mathbf{I}_v^*)\sum_{l=1}^n(A\circ B_l)^TD_l[\diag(\mathbf{N}_h)\diag^{-1}(A\circ B^\diamond\mathbf{I}^*_v)+\diag^{-1}(\mu)]A\circ B^\diamond \mathbf{I}_v^*\nonumber\\
&=&
\diag(\mathbf{N}_v-\mathbf{I}_v^*)\sum_{l=1}^n(A\circ B_l)^TD_l\diag(\mathbf{N}_h)\diag^{-1}(A\circ B^\diamond\mathbf{I}^*_v+\mu\circ N_h)A\circ B^\diamond \mathbf{I}_v^*\nonumber\\
&:=&F(\mathbf{I}_v^*)\nonumber\\
\end{eqnarray}
Thus, a unique strongly endemic equilibrium exists if and only if the vectorial equation (\ref{IvOnly}) has a unique nonnegative solution. Moreover, notice that Equation (\ref{IvOnly}) is satisfied if and only if $\mathbf{I}_v^*$ is an equilibrium of the auxiliary dynamical system:
 \begin{equation}\label{AuxDS}
\dot x=F(x)-\diag(\mu_v+\delta_v)x.
\end{equation}
Now, we will show that System (\ref{AuxDS}) has unique equilibrium $\mathbf{I}_v^*\gg0$ if the connectivity matrix $\mathcal N$ is irreducible and $\mathcal R^2_0(m,n,p)>1$. To this end, we will use elements from monotone systems and particularly Hirsch's theorem \cite{Hirsch84}. Indeed, System (\ref{AuxDS}) is monotone if and only if the vector field $F(x)-\diag(\mu_v+\delta_v)x$ is, or equivalently whenever the $F'(x)$ is Metzler. That is, the off-diagonal elements of $F'(x)$ are positive. We have:
\begin{multline}\label{Fprime}
F'(x) =\left(\diag(\mathbf{N}_v-x)\sum_{l=1}^n(A\circ B_l)^TD_l\diag(\mathbf{N}_h)\diag^{-1}(A\circ B^\diamond x+\mu\circ N_h)A\circ B^\diamond x\right)' \\
=-\diag\left( \sum_{l=1}^n(A\circ B_l)^TD_l\diag(\mathbf{N}_h)\diag^{-1}(A\circ B^\diamond x+\mu\circ N_h)A\circ B^\diamond x\right) \\
  +\diag(\mathbf{N}_v-x)\sum_{l=1}^n(A\circ B_l)^TD_l\diag(\mathbf{N}_h)\left(\frac{}{}\diag^{-1}(A\circ B^\diamond x+\mu\circ N_h)A\circ B^\diamond x\right)',
  \end{multline}
since the first term is a diagonal matrix. It follows from (\ref{Fprime}), that $F'(x)$ is Metzler as long as $M:=(\diag^{-1}(A\circ B^\diamond x+\mu\circ N_h)A\circ B^\diamond x)'$ is. We have:
\begin{eqnarray}\label{Fprime2}
M & =&(\diag^{-1}(A\circ B^\diamond x+\mu\circ N_h)A\circ B^\diamond x)'\nonumber\\
&=&\diag^{-1}(A\circ B^\diamond x+\mu\circ N_h)A\circ B^\diamond 
 +\diag(A\circ B^\diamond x)(\diag^{-1}(A\circ B^\diamond x+\mu\circ N_h))'\nonumber\\
&=&\diag^{-1}(A\circ B^\diamond x+\mu\circ N_h)A\circ B^\diamond\nonumber\\
& &+\diag(A\circ B^\diamond x)\left(-\frac{}{}\diag^{-1}(A\circ B^\diamond x+\mu\circ N_h)A\circ B^\diamond \diag^{-1}(A\circ B^\diamond x+\mu\circ N_h)\right)\nonumber\\
&=&\diag^{-1}(A\circ B^\diamond x+\mu\circ N_h)\left(\frac{}{} A\circ B^\diamond\diag(A\circ B^\diamond x+\mu\circ N_h)\right.\nonumber\\
& & -\left.\diag(A\circ B^\diamond x)A\circ B^\diamond\frac{}{}\right) \diag^{-1}(A\circ B^\diamond x+\mu\circ N_h)\nonumber\\
&=&\diag^{-1}(A\circ B^\diamond x+\mu\circ N_h) A\circ B^\diamond\diag(\mu\circ N_h)\diag^{-1}(A\circ B^\diamond x+\mu\circ N_h),\nonumber
\end{eqnarray}
as $A\circ B^\diamond\diag(A\circ B^\diamond x)=\diag(A\circ B^\diamond )A\circ B^\diamond x$. Hence, $M$ and therefore $F'(x)$ is Metzler. Moreover, since the Host-Vector configuration $\mathcal N$ is irreducible, $F'(x)$ is irreducible. Thus, the auxiliary dynamical system (\ref{AuxDS}) is strongly monotone. Moreover, it could seen that the map of $F'(x)$ is monotonically decreasing. Since  $F(0_{\R^p})-\diag(\mu_v+\delta_v)0_{\R^p}=0_{\R^p}$ and $x\leq N_v$, by Hirsch's theorem (\cite{Hirsch84}, page 55), the trajectories of the auxiliary system (\ref{AuxDS}) either tend to the origin, or else there is a unique equilibrium $\mathbf{I}_v^*\gg0$ and all trajectories tend to $\mathbf{I}_v^*$. The origin is unstable since, for $G(x):=F(x)-\diag(\mu_v+\delta_v)x$, we have $\rho(G'(0_{\R^p}))>1$. Indeed, it could be shown easily that $\rho(\diag^{-1}(\mu_v+\delta_v)F'(0_{\R^p}))=\mathcal R^2_0(m,n,p).$ We have shown that if $\mathcal R^2_0(m,n,p)>1$ and the host-vector configuration is irreducible, a unique solution $\mathbf{I}_v^*\gg0$ of  Equation (\ref{IvOnly}) exists. Hence, using (\ref{EErelations}), (\ref{Il}), we deduce that $\mathbf{E}^*\gg0$ and  $\mathbf{I}_l^*\gg0$, for $1\leq l\leq n$. We conclude therefore that
 System (\ref{MultiHostMultiVectorCompact}) has a unique strongly endemic equilibrium whenever $\mathcal{R}^2_0(m,n,p)>1$ and $\mathcal N$ is irreducible.
\end{proof}
The following subsection consists of investigating the asymptotic properties of this unique strongly endemic equilibrium.
  \subsection{Global Stability of the EE}
  \begin{theorem}\label{GASEE}\hfill 
  
The strongly endemic equilibrium is GAS whenever it exists.
 \end{theorem}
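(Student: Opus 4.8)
The plan is to construct a global Lyapunov function of Volterra type combined with the graph-theoretic machinery of Guo--Li--Shuai, but adapted to the multi-host/multi-stage/multi-vector structure. Work in the limiting system (\ref{MultiHostMultiVectorCompact}) with $N_i$, $N_{v,j}$ constant; by \Cref{ExistenceUniquenessEE} there is a unique endemic equilibrium $E^\ast$ with all coordinates positive. For each host $i$ I would use the standard building block $\varphi(u)=u-1-\ln u\geq 0$ and form partial Lyapunov functions for the ``chain'' $S_i\to E_i\to I_{1,i}\to\cdots\to I_{n,i}$, namely sums of the form $S_i^\ast\varphi(S_i/S_i^\ast)+E_i^\ast\varphi(E_i/E_i^\ast)+\sum_{l=1}^n c_{l,i}I_{l,i}^\ast\varphi(I_{l,i}/I_{l,i}^\ast)$ with constants $c_{l,i}$ chosen (using the equilibrium relations (\ref{Il})) so that the progression terms $\gamma_{l-1,i}I_{l-1,i}$ telescope. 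Similarly, for each vector $j$ use $S_{v,j}^\ast\varphi(S_{v,j}/S_{v,j}^\ast)+I_{v,j}^\ast\varphi(I_{v,j}/I_{v,j}^\ast)$. The global function is then $\mathcal{L}=\sum_{i=1}^m \omega_i\,V_i^{\mathrm{host}}+\sum_{j=1}^p \omega_{v,j}\,V_j^{\mathrm{vec}}$ for suitable positive weights $\omega$.

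The weights $\omega_i,\omega_{v,j}$ are where graph theory enters. After differentiating along trajectories and substituting the equilibrium identities, the cross terms coming from the bilinear infection forces $a_{i,j}\beta_{i,j}^\diamond S_i I_{v,j}/N_i$ (host $\leftarrow$ vector) and $a_{i,j}\beta^i_{l,j} S_{v,j} I_{l,i}/N_i$ (vector $\leftarrow$ host) should organize into expressions of the shape $a_{i,j}\beta_{i,j}^\diamond S_i^\ast I_{v,j}^\ast/N_i\cdot[\,\text{stuff that is }\leq 0\text{ by AM--GM}\,]$ plus a residual ``flow'' term. Following Shuai--van den Driessche, I would define a weighted digraph on the $m+p$ vertices (hosts and vectors) whose edge weight from vertex $j$ (vector) to vertex $i$ (host) is the equilibrium infection flux $a_{i,j}\beta_{i,j}^\diamond S_i^\ast I_{v,j}^\ast/N_i$, and symmetrically for host-to-vector edges, collapsing the $n$ stages into the single host vertex via the telescoped constants. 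Irreducibility of $\mathcal{N}$ guarantees this digraph is strongly connected, so Kirchhoff's matrix-tree theorem furnishes positive weights (cofactors of the Laplacian) making the residual flow terms cancel around cycles; this is exactly the combinatorial identity $\sum$ (flow along a cycle) telescopes to zero. With these weights, $\dot{\mathcal{L}}\leq 0$ on $\Omega$, with equality forcing $S_i=S_i^\ast$, $S_{v,j}=S_{v,j}^\ast$ and proportionality of all infected compartments to their equilibrium values; LaSalle's invariance principle then pins the largest invariant set in $\{\dot{\mathcal{L}}=0\}$ down to $\{E^\ast\}$, giving global attractivity, and combined with $\mathcal{L}$ being a proper (radially unbounded on the relevant face) positive-definite function one concludes GAS.

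A couple of technical points I would handle carefully. First, since the original system (\ref{MultiHostMultiVector}) has asymptotically constant total populations rather than exactly constant ones, I would prove GAS for the limit system (\ref{MultiHostMultiVectorCompact}) and then invoke the theory of asymptotically autonomous systems (\cite{CasThi95,0478.93044}, as already used in \Cref{BasicProperties}) to transfer the conclusion; one must check the chain-transitive/limit-set argument so that no spurious attractor appears. Second, the terms involving $S_i/S_i^\ast$ and $N_i/S_i$ (from the recruitment--death balance $\Lambda_i=\mu_iS_i^\ast+\sum_j a_{i,j}\beta^\diamond_{i,j}S_i^\ast I_{v,j}^\ast/N_i$) need the standard estimate $(1-S_i/S_i^\ast)(1-S_i^\ast/S_i)\leq 0$; similarly for vectors with $S_{v,j}$.

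The main obstacle I anticipate is purely bookkeeping but genuinely delicate: verifying that, after the telescoping over the $n$ infectious stages in each host, the surviving cross terms really do assemble into a single-cycle flow structure on the bipartite host--vector digraph with the \emph{same} edge-weight appearing with opposite signs at its two endpoints, so that the matrix-tree-theorem weights exist and kill everything. Getting the constants $c_{l,i}$ and the functional form of the $\varphi(\cdot)$ arguments (which ratios of $I_{l,i}/I_{l,i}^\ast$, $E_i/E_i^\ast$, $I_{v,j}/I_{v,j}^\ast$, $S_i/S_i^\ast$ appear inside the logarithms) exactly right so that the number of positive terms matches the number of logarithmic terms for the AM--GM cancellation is the crux; the arbitrary, possibly heterogeneous dimensions $m,n,p$ mean the usual ``lift the single-group Lyapunov function'' shortcut is unavailable and the function must be built and verified directly.
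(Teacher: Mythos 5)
Your outline is the right strategy and is essentially the one the paper follows: a weighted sum of Volterra--Goh functions over susceptibles, latents, stages and vectors, stage constants chosen so the progression terms telescope, weights from Kirchhoff's matrix-tree theorem on a strongly connected digraph, and AM--GM to finish. But there is one structural point where your plan diverges from what actually works, and one place where the real content is missing. Structurally: in the paper the \emph{only} weights delivered by the matrix-tree theorem are the $m$ host weights $v_i$, obtained as cofactors of an $m\times m$ Laplacian $\bar B$ whose entries $\clubsuit_{ik}=\sum_{j}w_{kj}\sum_{l}\bar\beta_{l,j}^{i}$ already aggregate over all $p$ vectors and all $n$ stages. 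The host--vector coupling weights $w_{ij}=a_{ij}\beta_{ij}^{\diamond}S_i^{\ast}/\bigl(N_i(\mu_{v,j}+\delta_{v,j})\bigr)$ and the stage constants $c_{k,i}$ are \emph{not} free parameters to be produced by graph theory; they are pinned down explicitly by forcing the linear terms in $I_{v,j}$, in $E_i$, and in each $I_{k,i}$ to cancel, and only after that does the Laplacian identity $\bar Bv=0$ (the paper's relation $(\bigstar\bigstar\bigstar)$) enter. Your proposal to run the matrix-tree theorem on an $(m+p)$-vertex bipartite digraph with an independent scalar weight $\omega_{v,j}$ per vector vertex is the Shuai--van den Driessche template; here each vector equation couples $I_{v,j}$ to all $mn$ terms $\bar\beta_{l,j}^{i}I_{l,i}/I_{l,i}^{\ast}$ at once, so a single weight per vector vertex cannot by itself cancel them, and the paper's introduction explicitly notes that lifting the one-group construction in this way is impractical for this model. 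You would at minimum have to rediscover the $i$-dependent couplings $w_{ij}$ inside each host block.

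The more serious issue is that you have correctly located the crux --- ``getting the constants $c_{l,i}$ and the ratios inside the logarithms exactly right so that positive terms match logarithmic terms'' --- and then stopped there. That matching \emph{is} the proof: one must verify that after all linear terms cancel, the remainder organizes as a sum over unicyclic graphs in which the constant is $5$ for stage-$1$ arcs and $4+l$ for stage-$l$ arcs (the extra $l-1$ coming from the telescoping ratios $\frac{I_{k-1,r}}{I_{k-1,r}^{\ast}}\frac{I_{k,r}^{\ast}}{I_{k,r}}$, $k=2,\dots,l$), and that the product of the corresponding ratios around each cycle equals $1$ so that AM--GM applies arc by arc. Without that count and the cycle identity, nothing has been proved. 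Your LaSalle step and the reduction to the limiting system are fine --- indeed more carefully stated than in the paper --- but they are the easy part.
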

  \begin{proof}
  We consider the following Lyapunov candidate $\mathcal V=\sum_{i=1}^mv_i\mathcal V_{i}$
  where
  \begin{eqnarray*}
  \mathcal V_{i}&=&\int_{S_i^\ast}^{S_i}\left(1-\frac{S_i^\ast}{x}\right)dx+\int_{E_{i}^\ast}^{E_i}\left(1-\frac{E_i^\ast}{x}\right)dx+\sum_{k=1}^nc_{k,i}\int_{I_{k,i}^\ast}^{I_{k,i}}\left(1-\frac{I_{k,i}^\ast}{x}\right)dx+\sum_{j=1}^pw_{ij}\mathcal V_{v,j},
  \end{eqnarray*}
$$\mathcal V_{v,j}=\int_{S_{v,j}^\ast}^{S_{v,j}}\left(1-\frac{S_{v,j}^\ast}{x}\right)dx +\int_{I_{v,j}^\ast}^{I_{v,j}}\left(1-\frac{I_{v,j}^\ast}{x}\right)dx,\;\; w_{ij}=\frac{a_{ij}\beta_{ij}^\diamond S_i^\ast}{N_i(\mu_{v,j}+\delta_{v,j})},$$
$$c_{1,i}=\frac{\nu_i+\mu_i}{\nu_i},\;\;\textrm{and}\;\; c_{k,i}=\frac{1}{\alpha_{k,i}I_{k,i}^\ast}\sum_{j=1}^pw_{ij} \sum_{i=1}^ma_{ij}\frac{S_{v,j}^\ast}{N_i}\sum_{l=k}^n\beta_{l,j}^iI_{l,i}^\ast\quad \forall k=1,\dots,n$$
The coefficients $v_i$ are positive to be determined later. The function $\mathcal V$ is definite positive and the goal is show that its derivative along the trajectories of the multi-host, multi-vector system (\ref{MultiHostMultiVectorCompact}) is definite-negative. To ease the notations, let  $$\bar\beta_{l,j}^i=a_{ij}S_{v,j}^*\frac{\beta_{l,j}^iI_{l,i}^*}{N_i}.$$
Throughout this proof, we will be using the component-wise endemic relations, which could be written as:
\begin{equation} \label{EErelationsCom}
\left\{\begin{array}{llll} 
\displaystyle\Lambda_i=\sum_{j=1}^pa_{i,j}\beta_{i,j}^\diamond S_{i}^\ast\dfrac{I_{v,j}^\ast}{N_{i}}+\mu_i S_i^\ast\\
(\mu_i+\nu_i)E_{h}^\ast=\dsum_{j=1}^pa_{i,j}\beta_{i,j}^\diamond S_{i}^\ast\dfrac{I_{v,j}^\ast}{N_{i}}\\
\alpha_{1,i} I_{1,i}^\ast=\nu_iE_i^\ast\\
\alpha_{2,i}I_{2,i}^\ast=\gamma_{1,i}I_{1,i}^\ast\\ 
\vdots\\
\alpha_{n-1,i} I_{n-1,i}^\ast=\gamma_{n-2,i}I_{n-2,i}^\ast\\
\alpha_{n,i} I_{n,i}^\ast=\gamma_{i,n-1}I_{i,n-1}^\ast\\
\displaystyle \Lambda_{v,j}=\sum_{i=1}^m\sum_{l=1}^n\bar\beta_{l,j}^i+(\mu_{v,j}+\delta_{v,j})S_{v,j}^\ast\\
\displaystyle (\mu_{v,j}+\delta_{v,j})I_{v,j}^\ast=\sum_{i=1}^m\sum_{l=1}^n \bar\beta_{l,j}^i
\end{array}\right.
\end{equation}
The derivative of $\mathcal V$ along the trajectories of (\ref{MultiHostMultiVectorCompact}) is given by:
  \begin{multline}\label{LyapDotHostGeneral111}
\dot{\mathcal V}=  \sum_{i=1}^mv_i\left[\mathcal A_{h,i}+\sum_{j=1}^pa_{i,j}\beta_{i,j}^\diamond S_{i}^\ast\dfrac{I_{v,j}^\ast}{N_{i}}\left( 2-\frac{S_i^\ast}{S_i} -\frac{E_i^\ast}{E_{i}}\frac{S_{i}}{S_{i}^\ast}\dfrac{I_{v,j}}{I_{v,j}^\ast}
-\frac{E_{i}}{E_{i}^\ast}\frac{I_{1,i}^\ast}{I_{1,i}}
\right) 
\right. \\
 \left. +\sum_{j=1}^pa_{i,j}\beta_{i,j}^\diamond S_{i}^\ast\dfrac{I_{v,j}}{N_{i}} -\left(\mu_i+\nu_i-c_{1,i}\nu_i\right)E_i   
 -\sum_{k=2}^nc_{k,i}\alpha_{k,i}I_{k,i}^\ast\frac{I_{k-1,i}}{I_{k-1,i}^\ast}\frac{I_{k,i}^\ast}{I_{k,i}}\right. \\
  \left.  
  +\sum_{k=1}^nc_{k,i}\alpha_{k,i}I_{k,i}^\ast 
-\sum_{k=1}^{n-1}(c_{k,i}\alpha_{k,i}-c_{k+1,i}\gamma_{k,i})I_{k,i}-c_{n,i}\alpha_{n,i}I_{n,i}\right. \\
 %
 \left. 
+ \sum_{j=1}^pw_{ij}\mathcal A_{v,j} 
+ \dsum_{j=1}^pw_{ij}\sum_{i=1}^m \sum_{l=1}^n \bar\beta_{l,j}^i  \left( 2 -\frac{S_{v,j}^\ast}{S_{v,j}}
 -\frac{I_{v,j}^\ast}{I_{v,j}}   \frac{S_{v,j}}{S_{v,j}^\ast}\frac{I_{l,i}}{I_{l,i}^\ast}
 \right)  \right. \\
  \left. +\dsum_{j=1}^pw_{ij}\sum_{i=1}^m \sum_{l=1}^n\bar\beta_{l,j}^i\dfrac{ I_{l,i}}{I_{l,i}^*}%
 -\dsum_{j=1}^pw_{ij}(\mu_{v,j}+\delta_{v,j})I_{v,j}\right],
    \end{multline}
where $\displaystyle \mathcal A_{h,i}=\mu_iS_i^\ast\left(2 -\frac{S_i^\ast}{S_i} -\frac{S_i}{S_i^\ast} \right)$ and $\displaystyle \mathcal A_{v,j}=(\mu_{v,j}+\delta_{v,j})S_{v,j}^\ast\left( 2-\frac{S_{v,j}^\ast}{S_{v,j}}-\frac{S_{v,j}}{S_{v,j}^\ast} \right).$

Given the expression of $w_{ij}$, the terms in $I_{v,j}$ sum to zero. Similarly, using the expression of $c_{1,i}$, the terms in $E_i$ also sum to zero.
  Hence, what remains in (\ref{LyapDotHostGeneral111}) is:
 \begin{multline}\label{LyapDotHostGeneral11}
\dot{\mathcal V}  =\sum_{i=1}^mv_i\left[\sum_{j=1}^pa_{i,j}\beta_{i,j}^\diamond S_{i}^\ast\dfrac{I_{v,j}^\ast}{N_{i}}\left( 2-\frac{S_i^\ast}{S_i} -\frac{E_i^\ast}{E_{i}}\frac{S_{i}}{S_{i}^\ast}\dfrac{I_{v,j}}{I_{v,j}^\ast}
-\frac{E_{i}}{E_{i}^\ast}\frac{I_{1,i}^\ast}{I_{1,i}}
\right) 
\right. \\
   \left.
  -\sum_{k=2}^nc_{k,i}\alpha_{k,i}I_{k,i}^\ast\frac{I_{k-1,i}}{I_{k-1,i}^\ast}\frac{I_{k,i}^\ast}{I_{k,i}} 
  +\sum_{k=1}^nc_{k,i}\alpha_{k,i}I_{k,i}^\ast 
-\sum_{k=1}^{n-1}(c_{k,i}\alpha_{k,i}-c_{k+1,i}\gamma_{k,i})I_{k,i}\right. \\
 \left. -c_{n,i}\alpha_{n,i}I_{n,i} 
+ \dsum_{j=1}^pw_{ij}\sum_{i=1}^m \sum_{l=1}^n \bar\beta_{l,j}^i \left( 2 -\frac{S_{v,j}^\ast}{S_{v,j}}
 -\frac{I_{v,j}^\ast}{I_{v,j}}   \frac{S_{v,j}}{S_{v,j}^\ast}\frac{I_{l,i}}{I_{l,i}^\ast}
 \right)  \right. \\
 \left. +\dsum_{j=1}^pw_{ij}\sum_{i=1}^m \sum_{l=1}^n\bar\beta_{l,j}^i\dfrac{ I_{l,i}}{I_{l,i}^*}%
 +\mathcal A_{h,i}+ \sum_{j=1}^pw_{ij}\mathcal A_{v,j}
 \right]
   \end{multline} 
Our first major claim is:
$$
\sum_{i=1}^mv_i\left[
\sum_{k=1}^{n-1}(c_{k,i}\alpha_{k,i}-c_{k+1,i}\gamma_{k,i})I_{k,i}-c_{n,i}\alpha_{n,i}I_{n,i} 
 -\dsum_{j=1}^pw_{ij}\sum_{i=1}^m\sum_{l=1}^n\bar\beta_{l,j}^i \dfrac{I_{l,i}}{I_{l,i}^*}%
 \right]=0\;\; (\bigstar)
$$
Indeed, the equality $(\bigstar)$ claims that all linear terms in $I_{l,i}$ ($1\leq l\leq n$) in (\ref{LyapDotHostGeneral11}) sum to zero.  By using the expressions of $c_{k,i}$, we obtain:
\begin{eqnarray*}
c_{k,i}\alpha_{k,i}-c_{k+1,i}\gamma_{k,i}&=& \frac{1}{I_{k,i}^\ast}\sum_{j=1}^pw_{ij}\sum_{i=1}^m \sum_{l=k}^n\bar\beta_{l,j}^i 
-\frac{\gamma_{k,i}}{\alpha_{k+1,i}I_{k+1,i}^\ast}\sum_{j=1}^pw_{ij}\sum_{i=1}^m \sum_{l=k+1}^n\bar\beta_{l,j}^i\\
&=& \frac{1}{I_{k,i}^\ast}\sum_{j=1}^pw_{ij}\sum_{i=1}^m \left(\sum_{l=k}^n\bar\beta_{l,j}^i-\sum_{l=k+1}^n\bar\beta_{l,j}^i\right)\\
%
&=& \frac{1}{I_{k,i}^\ast}\sum_{j=1}^pw_{ij}\sum_{r=1}^m \bar\beta_{k,j}^r. 
\end{eqnarray*}
Substituting this expression of $c_{k,i}\alpha_{k,i}-c_{k+1,i}\gamma_{k,i}$ in $(\bigstar)$, we obtain:
$$
\sum_{i=1}^mv_i\left[
\sum_{k=1}^{n}\left(  \frac{1}{I_{k,i}^\ast}\sum_{j=1}^pw_{ij}\sum_{r=1}^m \bar\beta_{k,j}^r \right)I_{k,i} 
 -\dsum_{j=1}^pw_{ij}\sum_{r=1}^m \sum_{l=1}^n\bar\beta_{l,j}^r\dfrac{ I_{l,r}}{I_{l,r}^*}%
 \right]=0. \;\; (\bigstar\bigstar)
$$
We work the second sum in $(\bigstar\bigstar)$ out in order to obtain a factor of $I_{k,i}$ and therefore compare it with the first one.
 We have:
\begin{eqnarray*}
\sum_{i=1}^mv_i\left[
\dsum_{j=1}^pw_{ij}\sum_{r=1}^m\sum_{l=1}^n\bar\beta_{l,j}^r\dfrac{I_{l,r}}{I_{l,r}^*}%
 \right]
 &=&\sum_{i=1}^mv_i\left[\dsum_{j=1}^pw_{ij}\sum_{l=1}^n\left(\sum_{r=1}^m \beta_{l,j}^r \dfrac{I_{l,r}}{I_{l,r}^*}\right)%
 \right]\\
 &=&\sum_{i=1}^m\left[\dsum_{j=1}^p\sum_{k=1}^n\sum_{r=1}^m v_iw_{ij}\bar\beta_{k,j}^r  \dfrac{I_{k,r}}{I_{k,r}^*}%
 \right],
 \end{eqnarray*}
 obtained by replacing the index $l$ by $k$ for convenience. Now, by successively switching the subindices $i$ and $r$; and using properties of nested sums, we obtain:
 \begin{eqnarray*}
\sum_{i=1}^mv_i\left[
\dsum_{j=1}^pw_{ij}\sum_{r=1}^m\sum_{l=1}^n\bar\beta_{l,j}^r\dfrac{I_{l,r}}{I_{l,r}^*}%
 \right] 
  &=&\sum_{r=1}^m\left[\dsum_{j=1}^p\sum_{k=1}^n\sum_{i=1}^m v_rw_{rj} \bar\beta_{k,j}^i \frac{I_{k,i}}{I_{k,i}^*}%
 \right]\\
  &=&\sum_{i=1}^m\sum_{k=1}^n\left[\dsum_{r=1}^m\sum_{j=1}^p v_rw_{rj} \bar\beta_{k,j}^i %
 \right]\frac{I_{k,i}}{I_{k,i}^*}.
\end{eqnarray*}
Thus, showing $(\bigstar\bigstar)$ is equivalent to show that:
$$\dsum_{r=1}^m\sum_{j=1}^pv_rw_{rj}\bar\beta_{k,j}^i = v_i\sum_{j=1}^pw_{ij}\sum_{r=1}^m \bar\beta_{k,j}^r
$$
This is also equivalent to:
$$\dsum_{r=1}^m\sum_{j=1}^p v_rw_{rj}  \dsum_{k=1}^n \bar\beta_{k,j}^i  = v_i\sum_{j=1}^pw_{ij}\sum_{r=1}^m  \sum_{k=1}^n \bar\beta_{k,j}^r \quad(\bigstar\bigstar\bigstar)
$$
Showing $(\bigstar\bigstar\bigstar)$ concludes the proof of $(\bigstar)$.
We notice that the relation $(\bigstar\bigstar\bigstar)$ is satisfied if $v_i$ are the components of the solution of the linear system $\bar Bv=0$, with
\begin{equation}\label{Bbar}\bar B=\left(\begin{array}{cccc}
\clubsuit_{11} & \clubsuit_{12} & \dots &\clubsuit_{1m}\\
\clubsuit_{21} &
\clubsuit_{22} &
  \dots & \clubsuit_{2m}\\
\vdots &\vdots&\ddots&\vdots\\
\clubsuit_{m1} &
\clubsuit_{m2}  & \dots &
 \clubsuit_{mm} 
\end{array}\right),\end{equation}
where
%
%
\begin{eqnarray*}
\clubsuit_{kk}&=&
\dsum_{j=1}^p w_{kj}\left(-\sum_{i=1,i\neq k}^m\sum_{l=1}^n\bar\beta_{l,j}^i \right)
\end{eqnarray*}
and 
\begin{eqnarray*}\clubsuit_{ik}
&=&
\dsum_{j=1}^p w_{kj}\sum_{l=1}^n\bar\beta_{l,j}^i, \quad\forall k\neq i.\end{eqnarray*}
Moreover, we have:
\begin{eqnarray*}
\sum_{i=1}^m\clubsuit_{ik}&=&\clubsuit_{kk}+\sum_{i=1,i\neq k}^m\clubsuit_{ik}\\
&=&\dsum_{j=1}^p w_{kj}\left(-\sum_{i=1,i\neq k}^m\sum_{l=1}^n\bar\beta_{l,j}^i \right)
+\sum_{i=1,i\neq k}^m\dsum_{j=1}^p w_{kj}\sum_{l=1}^n\bar\beta_{l,j}^i\\
&=&0.
\end{eqnarray*}
Since the Host-Vector connectivity configuration $\mathcal N$ is irreducible, the matrix $\bar B$ is also irreducible. Thus, it could be shown easily that $\dim(\ker(\bar B))=1$. Moreover, $v_i=-C_{ii}$ where $C_{ii}$ is the cofactor of the $i^{th}$ diagonal of $\bar B$. Also, $C_{ii}<0$ for all $i=1,2,\dots,m$. Thus, there exists $v=(v_1,v_2,\dots,v_m)^T\gg0$ such that $\bar Bv=0$. We choose the coefficients $v_i$ of $\mathcal V_i$ in Lyapunov function are as such. \\
To show $(\bigstar\bigstar\bigstar)$, we start by its left hand side (LHS):
\begin{eqnarray*}
LHS^{(\bigstar\bigstar\bigstar)}&=&\dsum_{r=1}^m\sum_{j=1}^p\left(v_rw_{rj} \sum_{k=1}^n\bar\beta_{k,j}^i \right)\\
&=&\dsum_{r=1,r\neq i}^mv_r\sum_{j=1}^p\left(w_{rj} \sum_{k=1}^n\bar\beta_{k,j}^i\right)+
v_i\sum_{j=1}^p\left(w_{ij}  \sum_{k=1}^n\bar\beta_{k,j}^i  \right).
\end{eqnarray*}
\noindent However, $\bar Bv=0$ implies that, for any $i$, 
 \mbox{$\clubsuit_{ii}v_i+\dsum_{r=1,r\neq i}^m\clubsuit_{ir}v_r=0$}. Therefore,
$$
\dsum_{j=1}^p w_{ij}\left(-\sum_{r=1,r\neq i}^m \sum_{l=1}^n\bar\beta_{l,j}^r \right)v_i+\dsum_{r=1,r\neq i}^m\left(\dsum_{j=1}^p w_{rj}  \sum_{l=1}^n\bar\beta_{l,j}^i \right)v_r=0
$$
Thus, using the properties of nested again, $LHS^{(\bigstar\bigstar\bigstar)}$ becomes
\begin{eqnarray*}
LHS^{(\bigstar\bigstar\bigstar)}&=&
\dsum_{r=1,r\neq i}^mv_r\sum_{j=1}^p\left(w_{rj} \sum_{k=1}^n\bar\beta_{k,j}^i  \right)+
v_i\sum_{j=1}^p\left(w_{ij}  \sum_{k=1}^n\bar\beta_{k,j}^i  \right)\\
&=&\dsum_{j=1}^p w_{ij} \left(\sum_{r=1,r\neq i}^m \sum_{l=1}^n\bar\beta_{l,j}^r \right)v_i+
v_i\sum_{j=1}^p\left(w_{ij} \sum_{k=1}^n\bar\beta_{k,j}^i  \right)\\
&=&v_i\dsum_{j=1}^pw_{ij}\left(\sum_{r=1,r\neq i}^m \sum_{l=1}^n\bar\beta_{l,j}^r +
  \sum_{k=1}^n\bar\beta_{k,j}^i 
\right)\\
&=&v_i\dsum_{j=1}^pw_{ij}\left(\sum_{r=1}^m\sum_{l=1}^n\bar\beta_{l,j}^r  
\right),
\end{eqnarray*}
which is exactly the right hand side of $(\bigstar\bigstar\bigstar)$. 

In summary, Relation $(\bigstar)$, obtained through $(\bigstar\bigstar)$ and $(\bigstar\bigstar\bigstar)$  cancels linear terms in $I_{l,i}$, for $1\leq l\leq n$, in the expression of $\dot{\mathcal V}$ given in relation (\ref{LyapDotHostGeneral11}). Thus, the latter yields to:
  \begin{multline} \label{LyapDotHostGeneral222}
\dot{\mathcal V}  =\sum_{i=1}^mv_i\left[\sum_{j=1}^pa_{i,j}\beta_{i,j}^\diamond S_{i}^\ast\dfrac{I_{v,j}^\ast}{N_{i}}\left( 2-\frac{S_i^\ast}{S_i} -\frac{E_i^\ast}{E_{i}}\frac{S_{i}}{S_{i}^\ast}\dfrac{I_{v,j}}{I_{v,j}^\ast}
-\frac{E_{i}}{E_{i}^\ast}\frac{I_{1,i}^\ast}{I_{1,i}}
\right) 
\right. \\
   \left.
  -\sum_{k=2}^nc_{k,i}\alpha_{k,i}I_{k,i}^\ast\frac{I_{k-1,i}}{I_{k-1,i}^\ast}\frac{I_{k,i}^\ast}{I_{k,i}} 
  +\sum_{k=1}^nc_{k,i}\alpha_{k,i}I_{k,i}^\ast 
\right. \\
 \left. 
+ \dsum_{j=1}^pw_{ij}\sum_{i=1}^m\sum_{l=1}^n\bar\beta_{l,j}^i \left( 2 -\frac{S_{v,j}^\ast}{S_{v,j}}
 -\frac{I_{v,j}^\ast}{I_{v,j}}   \frac{S_{v,j}}{S_{v,j}^\ast}\frac{I_{l,i}}{I_{l,i}^\ast} 
 \right)  %
 +\mathcal A_{h,i}+ \sum_{j=1}^pw_{ij} \mathcal A_{v,j} \right].
    \end{multline} 
    By using endemic relations (\ref{EErelationsCom}) and the expression of the coefficients $c_{k,i}$, we have $$a_{i,j}\beta_{i,j}^\diamond S_{i}^\ast\dfrac{I_{v,j}^\ast}{N_{i}}=w_{ij}\sum_{r=1}^m\sum_{l=1}^n\bar\beta_{l,j}^r,$$
    and
    $$c_{k,i}\alpha_{k,i}I_{k,i}^\ast=\sum_{j=1}^pw_{ij}\sum_{i=1}^m\sum_{l=k}^n\bar\beta_{l,j}^i \quad \forall k=1,\dots,n.$$
     Substituting these expression and switching $i$ by $r$ in the last sum of (\ref{LyapDotHostGeneral222}), the expression $\dot{\mathcal V}$ becomes:
      \begin{multline} \label{LyapDotHostGeneralCorrect}
\dot{\mathcal V}   
  =\sum_{i=1}^mv_i\left[\sum_{j=1}^pw_{ij}\sum_{r=1}^m\sum_{l=1}^n\bar\beta_{l,j}^r\left( 2-\frac{S_i^\ast}{S_i} -\frac{E_i^\ast}{E_{i}}\frac{S_{i}}{S_{i}^\ast}\dfrac{I_{v,j}}{I_{v,j}^\ast}
-\frac{E_{i}}{E_{i}^\ast}\frac{I_{1,i}^\ast}{I_{1,i}}
\right) 
\right. \\
  \left.
 -\sum_{k=2}^n\sum_{j=1}^pw_{ij}\sum_{i=1}^m\sum_{l=k}^n\bar\beta_{l,j}^i\frac{I_{k-1,i}}{I_{k-1,i}^\ast}\frac{I_{k,i}^\ast}{I_{k,i}} 
  +\sum_{k=1}^n\sum_{j=1}^pw_{ij}\sum_{i=1}^m\sum_{l=k}^n\bar\beta_{l,j}^i
\right. \\
 \left. 
+ \dsum_{j=1}^pw_{ij}\sum_{r=1}^m\sum_{l=1}^n\bar\beta_{l,j}^r \left( 2 -\frac{S_{v,j}^\ast}{S_{v,j}}
 -\frac{I_{v,j}^\ast}{I_{v,j}}   \frac{S_{v,j}}{S_{v,j}^\ast}\frac{I_{l,r}}{I_{l,r}^\ast}
 \right)  %
  +\mathcal A_{h,i}+ \sum_{j=1}^pw_{ij} \mathcal A_{v,j}
 \right] \\
=\sum_{i=1}^mv_i\left[\sum_{j=1}^pw_{ij}\sum_{r=1}^m \bar\beta_{1,j}^r \left( 5-\frac{S_i^\ast}{S_i} -\frac{E_i^\ast}{E_{i}}\frac{S_{i}}{S_{i}^\ast}\dfrac{I_{v,j}}{I_{v,j}^\ast}
-\frac{E_{i}}{E_{i}^\ast}\frac{I_{1,i}^\ast}{I_{1,i}}
-\frac{S_{v,j}^\ast}{S_{v,j}}
 -\frac{I_{v,j}^\ast}{I_{v,j}}   \frac{S_{v,j}}{S_{v,j}^\ast}\frac{I_{1,r}}{I_{1,r}^\ast}
\right) 
\right. \\
 +\left. \sum_{j=1}^pw_{ij}\sum_{r=1}^m\sum_{l=2}^n\bar\beta_{l,j}^r \left( 5-\frac{S_i^\ast}{S_i} -\frac{E_i^\ast}{E_{i}}\frac{S_{i}}{S_{i}^\ast}\dfrac{I_{v,j}}{I_{v,j}^\ast}
-\frac{E_{i}}{E_{i}^\ast}\frac{I_{1,i}^\ast}{I_{1,i}}
-\frac{S_{v,j}^\ast}{S_{v,j}}
 -\frac{I_{v,j}^\ast}{I_{v,j}}   \frac{S_{v,j}}{S_{v,j}^\ast}\frac{I_{l,r}}{I_{l,r}^\ast}
\right) 
\right. \\
  \left.
 -\sum_{k=2}^n\sum_{j=1}^pw_{ij}\sum_{i=1}^m\sum_{l=k}^n\bar\beta_{l,j}^i\frac{I_{k-1,i}}{I_{k-1,i}^\ast}\frac{I_{k,i}^\ast}{I_{k,i}} 
  +\sum_{k=2}^n\sum_{j=1}^pw_{ij}\sum_{i=1}^m\sum_{l=k}^n\bar\beta_{l,j}^i
  + \mathcal A_{h,i}+ \sum_{j=1}^pw_{ij} \mathcal A_{v,j} 
 \right]
   \end{multline} 
   Also, notice that:
  \begin{eqnarray*}
    \sum_{k=2}^n\sum_{j=1}^pw_{ij}\sum_{i=1}^m\sum_{l=k}^n\bar\beta_{l,j}^i&=&
    \sum_{j=1}^pw_{ij}\sum_{i=1}^m\sum_{k=2}^n\sum_{l=k}^n\bar\beta_{l,j}^i\\
        &=&
    \sum_{j=1}^pw_{ij}\sum_{i=1}^m\sum_{l=2}^n(l-1)\bar\beta_{l,j}^i. 
  \end{eqnarray*}
  Therefore, Equation (\ref{LyapDotHostGeneralCorrect}) implies that:
 \begin{multline} \label{LyapDotHostGeneralCorrect2}
\dot{\mathcal V}    
    =\sum_{i=1}^mv_i\left[\sum_{j=1}^pw_{ij}\sum_{r=1}^m \bar\beta_{1,j}^r\left( 5-\frac{S_i^\ast}{S_i} -\frac{E_i^\ast S_{i}}{E_{i} S_{i}^\ast}\dfrac{I_{v,j}}{I_{v,j}^\ast}
-\frac{E_{i}}{E_{i}^\ast}\frac{I_{1,i}^\ast}{I_{1,i}}
-\frac{S_{v,j}^\ast}{S_{v,j}}
 - \frac{I_{v,j}^\ast S_{v,j}I_{1,r}}{I_{v,j}S_{v,j}^\ast I_{1,r}^\ast}
\right) 
\right. \\
 +\left. \sum_{j=1}^pw_{ij}\sum_{r=1}^m \sum_{l=2}^n\bar\beta_{l,j}^r\left( 4+l-\frac{S_i^\ast}{S_i} -\frac{E_i^\ast}{E_{i}}\frac{S_{i}}{S_{i}^\ast}\dfrac{I_{v,j}}{I_{v,j}^\ast}
-\frac{E_{i}}{E_{i}^\ast}\frac{I_{1,i}^\ast}{I_{1,i}}
-\frac{S_{v,j}^\ast}{S_{v,j}}
 -\frac{I_{v,j}^\ast}{I_{v,j}}   \frac{S_{v,j}}{S_{v,j}^\ast}\frac{I_{l,r}}{I_{l,r}^\ast}
\right) 
\right. \\
  \left.
 -\sum_{j=1}^pw_{ij}\sum_{i=1}^m\sum_{k=2}^n\sum_{l=k}^n\bar\beta_{l,j}^i\frac{I_{k-1,i}}{I_{k-1,i}^\ast}\frac{I_{k,i}^\ast}{I_{k,i}} 
  +\mathcal A_{h,i}+ \sum_{j=1}^pw_{ij} \mathcal A_{v,j}
 \right].
    \end{multline} 
     Moreover,
  \begin{eqnarray*}
   \sum_{k=2}^n\sum_{l=k}^n\bar\beta_{l,j}^i\frac{I_{k-1,i}}{I_{k-1,i}^\ast}\frac{I_{k,i}^\ast}{I_{k,i}}
   &=& \sum_{l=2}^n\sum_{k=l}^n\bar\beta_{k,j}^i\frac{I_{l-1,i}}{I_{l-1,i}^\ast}\frac{I_{l,i}^\ast}{I_{l,i}}\\
   &=&\sum_{l=2}^n\bar\beta_{l,j}^i \sum_{k=2}^l\frac{I_{k-1,i}}{I_{k-1,i}^\ast}\frac{I_{k,i}^\ast}{I_{k,i}},
   \end{eqnarray*}
   since $\dsum_{l=2}^n\dsum_{k=l}^nu_lv_k=\dsum_{l=2}^nv_l\dsum_{k=2}^lv_k.$ Thus, Equation (\ref{LyapDotHostGeneralCorrect2}) implies
  \begin{multline} \label{LyapDotHostGeneralCorrect4}
\dot{\mathcal V}    
    =\sum_{i=1}^mv_i\left[\sum_{j=1}^pw_{ij}\sum_{r=1}^m \bar\beta_{1,j}^r \left( 5-\frac{S_i^\ast}{S_i} -\frac{E_i^\ast S_{i}}{E_{i}S_{i}^\ast}\dfrac{I_{v,j}}{I_{v,j}^\ast}
-\frac{E_{i}}{E_{i}^\ast}\frac{I_{1,i}^\ast}{I_{1,i}}
-\frac{S_{v,j}^\ast}{S_{v,j}}
 -\frac{I_{v,j}^\ast S_{v,j}I_{1,r}}{I_{v,j} S_{v,j}^\ast I_{1,r}^\ast} 
\right) 
\right. \\
 +\left. \sum_{j=1}^pw_{ij}\sum_{r=1}^m \sum_{l=2}^n\bar\beta_{l,j}^r \left( 4+l-\frac{S_i^\ast}{S_i} -\frac{E_i^\ast S_{i}}{E_{i} S_{i}^\ast}\dfrac{I_{v,j}}{I_{v,j}^\ast}
-\frac{E_{i}}{E_{i}^\ast}\frac{I_{1,i}^\ast}{I_{1,i}}
-\frac{S_{v,j}^\ast}{S_{v,j}}
 -\frac{I_{v,j}^\ast}{I_{v,j}}   \frac{S_{v,j}}{S_{v,j}^\ast}\frac{I_{l,r}}{I_{l,r}^\ast}
\right) 
\right. \\
   \left.
 -\sum_{j=1}^pw_{ij}\sum_{r=1}^m \sum_{l=2}^n\bar\beta_{l,j}^r \sum_{k=2}^l\frac{I_{k-1,r}}{I_{k-1,r}^\ast}\frac{I_{k,r}^\ast}{I_{k,r}} 
  +\mathcal A_{h,i}+ \sum_{j=1}^pw_{ij} \mathcal A_{v,j} 
 \right].
     \end{multline} 
Finally, by combining the second and third sums in (\ref{LyapDotHostGeneralCorrect4}), we obtain:   
      \begin{align}\label{LyapDotHostGeneral5}
\dot{\mathcal V}    
  &=\sum_{i,r=1}^m\sum_{j=1}^pv_iw_{ij}\bar\beta_{1,j}^r\left( 5-\frac{S_i^\ast}{S_i} -\frac{E_i^\ast}{E_{i}}\frac{S_{i}}{S_{i}^\ast}\dfrac{I_{v,j}}{I_{v,j}^\ast}
-\frac{E_{i}}{E_{i}^\ast}\frac{I_{1,i}^\ast}{I_{1,i}}
-\frac{S_{v,j}^\ast}{S_{v,j}}
 -\frac{I_{v,j}^\ast}{I_{v,j}}   \frac{S_{v,j}}{S_{v,j}^\ast}\frac{I_{1,r}}{I_{1,r}^\ast}
\right)\\
&+ \sum_{i=1,r}^m\sum_{j=1}^pv_iw_{ij} \dsum_{l=2}^n\bar\beta_{l,j}^r\left( 4+l-\frac{S_i^\ast}{S_i} -\frac{E_i^\ast}{E_{i}}\frac{S_{i}}{S_{i}^\ast}\dfrac{I_{v,j}}{I_{v,j}^\ast}
-\frac{E_{i}}{E_{i}^\ast}\frac{I_{1,i}^\ast}{I_{1,i}}
-\frac{S_{v,j}^\ast}{S_{v,j}}
 -\frac{I_{v,j}^\ast}{I_{v,j}}   \frac{S_{v,j}}{S_{v,j}^\ast}\frac{I_{l,r}}{I_{l,r}^\ast}\right.\nonumber\\
&  \left.-\sum_{k=2}^l\frac{I_{k-1,r}}{I_{k-1,r}^\ast}\frac{I_{k,r}^\ast}{I_{k,r}}
\right) 
  +\sum_{i=1}^mv_i\mathcal A_{h,i}+ \sum_{i=1}^m\sum_{j=1}^pv_iw_{ij} \mathcal A_{v,j} 
 \nonumber\\
 &=  \mathcal S_1+\mathcal S_2+\sum_{i=1}^mv_i\mathcal A_{h,i}   +\sum_{i=1}^m\sum_{j=1}^pv_iw_{ij} \mathcal A_{v,j}\nonumber. 
    \end{align}
The terms $\mathcal A_{v,j}$ and $\mathcal A_{h,i}$ are clearly definite-negative. Note that, taken separately,  the two sums $\mathcal S_1$ and $\mathcal S_2$ in (\ref{LyapDotHostGeneral5}) are not definite negative. Hence, to finish the proof, we will prove that the sum $ \mathcal S_1+\mathcal S_2$ is definite-negative.
   To do so, we look at the coefficients of the sums from the graph-theoretical standpoint, following the same approach as  \cite{Guo_li_CAMQ_06,Guo_li_PAMS08,guo2012global,iggidr2014dynamics}. Indeed,  let $G(\mathcal N)$ be the directed graph that represents the connectivity $\mathcal N$ between the $m$ hosts (including the $l$ stages) and $p$ vectors. Since the Host-Vector connectivity configuration $\mathcal N$ is irreducible, it follows that graph $G(\mathcal N)$ is strongly connected. Recall that, for $i=1,2,\dots,m$, $v_i$ are components of the solution of the system $\bar Bv=0$, where $\bar B$ is given by (\ref{Bbar}). The matrix $\bar B$ is the so-called Laplacian matrix and its associated the graph $G(\bar B)$ is strongly connected since $\bar B$ is irreducible, since $\mathcal N$ is. Moreover, the solution of $\bar Bv=0$ is given by the Kirchhoff's matrix tree theorem \cite{bollobas2013modern,moon1970counting} as follows: $$v_i=\sum_{T\in\mathbb T_i}w(T)$$
 where $\mathbb T_i$ is the set of all spanning trees $T$ of $G(\bar B)$ rooted at Host $i$. Particularly, $$w(T)=\prod_{(m,l,r,j)\in E(T)}\bar\beta_{l,j}^mw_{rj},$$
 where $E(T)$ is the set of all arcs in $T$. For our setup, an arc $(i,l,i',j)$ describes an infection arc that starts from Host $i$, at stage $l$ directed to Host $i'$ through Vector $j$. From a modeling standpoint, $v_i$ connects Host $i$ to all vectors (through $\bar\beta_{l,j}^i$) and connect all vectors to all hosts but $i$ (through $w_{kj}$, with $k\neq i$). By using Cayley's formula \cite{aigner1998cayley,moon1970counting}, each $v_i=C_{ii}$ is the sum of $n^{m-1}p^{m-1}{m}^{m-2}$ terms, each of which is the product of $m'-1$ $w_{kj}\bar\beta_{l,j}^i$ with $k\neq i$. These $w_{kj}\bar\beta_{l,j}^i$ represent the weight of each spanning tree $\mathbb T_i$, rooted at Host $i$. \\

Now, we investigate what each terms of $v_iw_{ij}\bar\beta_{l,j}^r$ represents in terms of the graph  $G(\bar B)$ and its spanning trees. Indeed, each term in $v_iw_{ij}\bar\beta_{l,j}^r$ for all $i,\; j,\; l,\; r$ is a weight of a unicyclic graph of a particular length, obtained by adding an arc $(r,l,i,j)$ to a directed tree rooted at Host $i$, $T\in\mathbb T_i$. The obtained unicyclic graph $Q$ has a unique cycle $CQ$ of length $1\leq d\leq m$. We group all isomorphic cycles as they will have the same coefficients. Hence, we conclude that $\mathcal S_1+\mathcal S_2$ is the sum over all unicyclic graphs as  
  $\mathcal S_1+\mathcal S_2= \sum_{Q}\mathcal S_{Q},
 $ 
 where
 \begin{multline}\label{eq:SumOnCycles}
\mathcal S_{Q}=w(Q_1)\sum_{(i,1,r,j)\in E(CQ_1)}\left( 5-\frac{S_i^\ast}{S_i} -\frac{E_i^\ast S_{i}I_{v,j}}{E_{i} S_{i}^\ast I_{v,j}^\ast}
-\frac{E_{i}}{E_{i}^\ast}\frac{I_{1,i}^\ast}{I_{1,i}}
-\frac{S_{v,j}^\ast}{S_{v,j}}
 -\frac{I_{v,j}^\ast S_{v,j}I_{1,r}}{I_{v,j} S_{v,j}^\ast I_{1,r}^\ast} \right) \\
  +w(Q_2)\sum_{(i,l,r,j)\in E(CQ_2)}\left( 4+l-\frac{S_i^\ast}{S_i} -\frac{E_i^\ast}{E_{i}}\frac{S_{i}}{S_{i}^\ast}\dfrac{I_{v,j}}{I_{v,j}^\ast}
-\frac{E_{i}}{E_{i}^\ast}\frac{I_{1,i}^\ast}{I_{1,i}}
-\frac{S_{v,j}^\ast}{S_{v,j}}
 -\frac{I_{v,j}^\ast}{I_{v,j}}   \frac{S_{v,j}}{S_{v,j}^\ast}\frac{I_{l,r}}{I_{l,r}^\ast}\right. \\
  \left.-\sum_{k=2}^l\frac{I_{k-1,r}}{I_{k-1,r}^\ast}\frac{I_{k,r}^\ast}{I_{k,r}}
 \right)\ \\
 +w(Q_3)\sum_{(i,l,r,j)\in E(CQ_3)}\left( 9+l-\frac{S_i^\ast}{S_i} -\frac{E_i^\ast}{E_{i}}\frac{S_{i}}{S_{i}^\ast}\dfrac{I_{v,j}}{I_{v,j}^\ast}
-\frac{E_{i}}{E_{i}^\ast}\frac{I_{1,i}^\ast}{I_{1,i}}
-\frac{S_{v,j}^\ast}{S_{v,j}}
 -  \frac{I_{v,j}^\ast S_{v,j}I_{1,r}}{I_{v,j} S_{v,j}^\ast I_{1,r}^\ast} \right. \\
 \left.-\frac{S_i^\ast}{S_i} -\frac{E_i^\ast}{E_{i}}\frac{S_{i}}{S_{i}^\ast}\dfrac{I_{v,j}}{I_{v,j}^\ast}
-\frac{E_{i}}{E_{i}^\ast}\frac{I_{1,i}^\ast}{I_{1,i}}
-\frac{S_{v,j}^\ast}{S_{v,j}}
 -\frac{I_{v,j}^\ast}{I_{v,j}}   \frac{S_{v,j}}{S_{v,j}^\ast}\frac{I_{l,r}}{I_{l,r}^\ast} 
-\sum_{k=2}^l\frac{I_{k-1,r}}{I_{k-1,r}^\ast}\frac{I_{k,r}^\ast}{I_{k,r}}
 \right),
 \end{multline}
where $CQ_1$, $CQ_2$ (for which $l\geq2$) and $CQ_3$ represents the cycles that correspond to elements in $\mathcal S_1$, $\mathcal S_2$ and $\mathcal S_1+\mathcal S_2$ exclusively. Now, each of the three sums in (\ref{eq:SumOnCycles}) are definite-negative. Indeed, we have:
$$
\prod_{(i,1,r,j)\in E(CQ_1)}\frac{S_i^\ast}{S_i} \frac{E_i^\ast S_{i}I_{v,j}}{E_{i} S_{i}^\ast I_{v,j}^\ast}
\frac{E_{i}}{E_{i}^\ast}\frac{I_{1,i}^\ast}{I_{1,i}}
\frac{S_{v,j}^\ast}{S_{v,j}}
 \frac{I_{v,j}^\ast S_{v,j}I_{1,r}}{I_{v,j} S_{v,j}^\ast I_{1,r}^\ast}
 =
 \prod_{(i,1,r,j)\in E(CQ_1)} 
\frac{I_{1,i}^\ast}{I_{1,i}}
 \frac{ I_{1,r}}{  I_{1,r}^\ast}=1,
$$
since $CQ_1$ is a cycle. Also,
\begin{multline*}\prod_{(i,l,r,j)\in E(CQ_2)}\frac{S_i^\ast}{S_i}\frac{E_i^\ast}{E_{i}}\frac{S_{i}}{S_{i}^\ast}\dfrac{I_{v,j}}{I_{v,j}^\ast}
\frac{E_{i}}{E_{i}^\ast}\frac{I_{1,i}^\ast}{I_{1,i}}
\frac{S_{v,j}^\ast}{S_{v,j}}
 \frac{I_{v,j}^\ast}{I_{v,j}}   \frac{S_{v,j}}{S_{v,j}^\ast}\frac{I_{l,r}}{I_{l,r}^\ast} 
\prod_{k=2}^l\frac{I_{k-1,r}}{I_{k-1,r}^\ast}\frac{I_{k,r}^\ast}{I_{k,r}}\\
=\prod_{(i,l,r,j)\in E(CQ_2)}  
 \frac{I_{1,i}^\ast}{I_{1,i}}
 \frac{I_{l,r}}{I_{l,r}^\ast} 
\prod_{k=2}^l\frac{I_{k-1,r}}{I_{k-1,r}^\ast}\frac{I_{k,r}^\ast}{I_{k,r}}=1.
\end{multline*}
Similarly, the product of the non constant terms in the last sum  (\ref{eq:SumOnCycles}) is equal to 1 as $CQ_3$ is a cycle.

Hence, by the arithmetic-geometric mean, $\mathcal S_{Q}$ is definite-negative for each unicyclic graph $Q$. It follows from (\ref{LyapDotHostGeneral5}), that $\dot{\mathcal V}$ is definite-negative. Thus, the global stability of the EE follows from the Lyapunov stability theorem. This ends the proof.
  \end{proof}
  \begin{remark}
 Although the Lyapunov function obtained in Theorem \ref{GASEE} is somehow similar to those obtained in \cite{Guo_li_CAMQ_06,Guo_li_PAMS08} for multi-group models, they are structurally different, in the sense that it is not linear combination of Lyapunov functions of one-group (or one host, multiple vectors). Indeed, the orbital derivative of Lyapunov function $\mathcal V=\sum_{i=1}^mv_i\mathcal V_i$ where is $\mathcal V_i$ is the Lyapunov function for one host and multiple vectors, is not definite-definite negative for the coefficient $v=(v_1,v_2,\dots,v_m)^T$.
 \end{remark}
   
    In this paper, we have considered an $SI$ model structure for the dynamics of the vectors. However, the results obtained here are valid for $SEI$ type of structure. Indeed, in this case, the following Lyapunov function
     $\mathcal V=\sum_{i=1}^mv_i\mathcal V_{i}$
  where
  \begin{eqnarray*}
  \mathcal V_{i}&=&\int_{S_i^\ast}^{S_i}\frac{x-S_i^\ast}{x}dx+\int_{E_{i}^\ast}^{E_i}\frac{x-E_i^\ast}{x}dx+\sum_{k=1}^nc_{k,i}\int_{I_{k,i}^\ast}^{I_{k,i}}\frac{x-I_{k,i}^\ast}{x}dx+\sum_{j=1}^pw_{ij}\mathcal V_{v,j}
  \end{eqnarray*}
where $$\mathcal V_{v,j}=\int_{S_{v,j}^\ast}^{S_{v,j}}\frac{x-S_{v,j}^\ast}{x}dx     +\int_{E_{v,j}^\ast}^{E_{v,j}}\frac{x-E_{v,j}^\ast}{x}dx +\frac{\nu_{v,j}+\mu_{v,j}+\delta_{v,j}}{\nu_{v,j}}\int_{I_{v,j}^\ast}^{I_{v,j}}\frac{x-I_{v,j}^\ast}{x}dx,$$ $$w_{ij}=\frac{a_{ij}\beta_{ij}^\diamond S_i^\ast\nu_{v,j}}{N_i(\nu_{v,j}+\mu_{v,j}+\delta_{v,j})(\mu_{v,j}+\delta_{v,j})},\quad c_{1,i}=\frac{\nu_i+\mu_i}{\nu_i}.$$
And
$$c_{k,i}=\frac{1}{\alpha_{k,i}I_{k,i}^\ast}\sum_{j=1}^pw_{ij}\sum_{i=1}^ma_{ij}\frac{S_{v,j}^\ast}{N_i}\sum_{l=k}^n\beta_{l,j}^iI_{l,i}^\ast\quad \forall k=1,\dots,n.$$
This Lyapunov function has its derivative along the trajectories of the $SEI^nR-SEI$ system, definite negative. The corresponding coefficients $v_i$ are determined in the same fashion as the $SEI^nR-SI$ case.

     \section{Conclusion and Discussions}\label{ConclusionDiscussions}
     In this paper, we formulated a multi-host, multi-stage and multi-vector epidemic model that describes the evolution of a class of zoonoses in which the pathogen is shared by multiple host species and the transmission occurs through the biting or landing of an arthropod vector. The proposed model improves those of  \cite{BicharaIggidrSmith2017,johnson2016modeling,palmer2018dynamics}  ---by incorporating multiple arthropod vector species--- and  \cite{cruz2012control,cruz2012multi}  ---by incorporating multiple hosts within multiple stages in each hosts' infectious class and the heterogeneous nature of the interactions between these host species and multiple arthropod vector species.
 We computed the basic reproduction number of the general system $\mathcal{R}_0^2(m,n,p)$, for $m$ hosts, $n$ hosts' infectious stages and $p$ vectors species. We proved that the disease free equilibrium is globally asymptotically stable whenever $\mathcal{R}_0^2(m,n,p)<1$ (\Cref{GASDFE}).  Under the assumption the host-vector network configuration is irreducible, we proved that there exists a unique ``strongly" endemic equilibrium as long as $\mathcal{R}_0^2(m,n,p)>1$ and that, it is GAS whenever it exists. This result is new and improves previous results of \cite{BicharaIggidrSmith2017}, for which the global result for multiple hosts and one vector is given, and of \cite{cruz2012multi} in which a case of multi-species, multi-vector is considered but no stability results were given.
 
 The global stability of the strongly equilibrium relies on a carefully constructed  Lyapunov functions and tools of graph theory, \`a la \cite{Guo_li_CAMQ_06,Guo_li_PAMS08,iggidr2014dynamics}. 
 The uniqueness and the global stability of the strongly endemic equilibrium requires the irreducibility of the host-vector network (\Cref{ExistenceUniquenessEE} and \Cref{GASEE}), leading to the conclusion that the disease either dies out or persists in all hosts and all vectors. That is, controlling the disease requires an intervention in all hosts. This is close to impossible, given that some hosts are not even known. Furthermore, in some cases the natural habitats of some hosts and vectors are so apart that it is unlikely there is a direct transmission between these hosts and vector for an infection to take place. This could collapse the irreducibility of the hosts-vectors configuration. Thus, it is important to investigate the global asymptotic behavior of the solutions when the host-vector network configuration is reducible. This is the subject of a separate study and will be published elsewhere.

 Other venues of expanding this work consist of considering different functional reproduction schemes for different vectors and/or hosts. Indeed, in this paper, although we have different reproduction (recruitment) rate for each hosts and vectors, they follow the same scheme, that is, they all consists of constant recruitments.

  \section*{Acknowledgments}
The author  would like to thank B. K. Druken, A. Iggidr and L. Smith for helpful discussions.


\end{document}